\xpatchcmd{\proof}{\itshape}{\normalfont\proofnamefont}{}{}
\newcommand{\proofnamefont}{\sc}
\newcommand{\ie}{{i.e.\ }}
\DeclareMathOperator{\diam}{diam}
\newcommand{\hyp}{{\mycal S}}
\newcommand{\mcM}{\mathscr{M}}
\newcommand{\R}{\ensuremath{\mathbb{R}}}
\newcommand{\N}{\ensuremath{\mathbb{N}}}
\newcommand{\D}{\mathrm{d}}
\newcommand{\dx}{\mathrm{d}x}
\newcommand{\ds}{\mathrm{d}s}
\newcommand{\dt}{\mathrm{d}t}
\newcommand{\Ni}[1]{\mathcal{N}\indices{#1}}
\newtheoremstyle{mytheoremstyle} % name
    {\medskipamount}                    % Space above
    {\medskipamount}                    % Space below
    {\itshape}                   % Body font
    {}                           % Indent amount
    {\scshape}                   % Theorem head font
    {.}                          % Punctuation after theorem head
    {5pt plus 1pt minus 1pt}                       % Space after theorem head
    {}  % Theorem head spec (can be left empty, meaning ‘normal’)
\theoremstyle{mytheoremstyle}
\newtheorem{theorem}{Theorem}[section]
\newtheorem{lemma}[theorem]{Lemma}
\newtheorem{proposition}[theorem]{Proposition}
\newtheoremstyle{mydefremarkstyle} % name
    {\medskipamount}                    % Space above
    {\medskipamount}                    % Space below
    {}                   % Body font
    {}                           % Indent amount
    {\scshape}                   % Theorem head font
    {.}                          % Punctuation after theorem head
    {5pt plus 1pt minus 1pt}                       % Space after theorem head
    {}  % Theorem head spec (can be left empty, meaning ‘normal’)
\theoremstyle{mydefremarkstyle}
\newtheorem{remark}[theorem]{Remark}
\newtheorem{definition}[theorem]{Definition}
\newcounter{mnotecount}[section]
\renewcommand{\themnotecount}{\thesection.\arabic{mnotecount}}
\newcommand{\mnote}[1]%{}
{\protect{\stepcounter{mnotecount}}$^{\mbox{\footnotesize
$%\!\!\!\!\!\!\,
\bullet$\themnotecount}}$ \marginpar{%\color{red}%
\raggedright\tiny\em
$\!\!\!\!\!\!\,\bullet$\themnotecount: #1} }
\definecolor{cadmiumgreen}{rgb}{0.0, 0.42, 0.24}
\DeclareFontFamily{OT1}{rsfs}{}
\DeclareFontShape{OT1}{rsfs}{m}{n}{ <-7> rsfs5 <7-10> rsfs7 <10-> rsfs10}{}
\DeclareMathAlphabet{\mycal}{OT1}{rsfs}{m}{n}
\global\let\AddToReset=\@addtoreset}
\global\let\AddToReset=\@addtoreset}
\global\let\AddToReset=\@addtoreset}
\renewcommand{\mcM}{M}
\begin{document}

\title{The annoying  null boundaries }

\author[P.T. Chru\'sciel]{Piotr T.~Chru\'sciel}

\address{Piotr	T.~Chru\'sciel, Faculty of Physics and Erwin Schr\"odinger Institute, University of Vienna, Boltzmanngasse 5, A1090 Wien, Austria}
\email{piotr.Chru\'sciel@univie.ac.at} \urladdr{http://homepage.univie.ac.at/piotr.Chru\'sciel/}

\author[P. Klinger]{Paul Klinger}

\address{Paul Klinger, Faculty of Physics and Erwin Schr\"odinger Institute, University of Vienna, Boltzmanngasse 5, A1090 Wien, Austria}
\email{paul.klinger@univie.ac.at}

\thanks{Preprint UWThPh-2017-36}
%
%\author{Piotr T. Chru\'sciel and Paul Klinger}
%\address{Faculty of Physics and Erwin Schr\"odinger Institute, University of Vienna, Boltzmanngasse 5, A1090 Wien, Austria}
%\ead{piotr.chrusciel@univie.ac.at, paul.klinger@univie.ac.at}

\begin{abstract}
We consider a class of globally hyperbolic space-times with ``expanding singularities''. Under suitable assumptions we show that no $C^0$-extensions across a compact boundary  exist, while the boundary must be null wherever differentiable (which is almost everywhere) in the non-compact case.
\end{abstract}

\maketitle

\tableofcontents

\section{Introduction}

\medskip

One of the major open questions in mathematical general relativity is the behavior of globally hyperbolic space-times when singular boundaries are approached. In particular the question of extendibility of the metric across such boundaries lies at the heart of the ``cosmic censorship conjecture''~\cite{PenroseSCC}, compare~\cite{ChrCM,SCC}.

In a recent important paper, Sbierski~\cite{SbierskiSchwarzschild} has established $C^0$-inextendibility of the Kruskal-Szekeres extension of the Schwarzschild metric. His proof  makes uses of the $SO(2)\times \R$ symmetry of the metric, which renders the argument unsuitable in situations where no isometries exist. (See also~\cite{SbierskiLingGalloway}.)

It is of interest to enquire whether  Sbierski's analysis can be adapted to more general spacetimes of interest, without isometries.
 The object of this note is to point out   a class of space-times where $C^0$-extendibility can only happen across null boundaries, if at all. Indeed, let us consider a globally hyperbolic space-time with a differentiable Cauchy time function $t$ covering $(0,\infty)$; we will be interested in possible extensions of $\mcM$ towards the past,  see Definition~\ref{def:fpextension} below. The function $t$ determines a topological splitting $\mcM=(0,\infty)\times \hyp$, where the  slices   $\{\tau\}\times \hyp$ are the level sets of $t$; here  one travels from a slice  $\{t_1\}\times\hyp$ to $\{t_2\}\times\hyp$ by following the integral curves of $\nabla t$.

\begin{definition}\label{def:expanding_global}
We shall say that a \underline{globally hyperbolic}
space-time $(\mcM,g)$, with a time function $t$ as just described, contains a \emph{globally  expanding singularity towards the past} if for every open set $A\subset \hyp$ there exists a sequence $t_i$ decreasing to zero such that the (Riemannian) diameter of $\{t_i\}\times A$ within $\{t_i\}\times \hyp$  tends to infinity as $t_i\to 0$.
\end{definition}

As discussed in more detail in Section~\ref{s21X16.1} below, the space-times constructed in~\cite{Chrusciel2015,Klinger2015}, as well as Gowdy, and various other
``Asymptotically Velocity Term Dominated'' (AVTD) spacetimes obtained via Fuchsian methods are of this type, and the theorems below apply.

All the boundaries that we consider in this work will be achronal, hence differentiable almost everywhere by standard arguments (cf. the beginning of Section~\ref{s2-X17.11}).

We have:

\begin{theorem}
  \label{T12VI17.1-}
Suppose that $(\mcM,g)$ contains  a globally expanding singularity towards the past.
 Then in every continuous past extension
  of $\mcM$ the boundary of $\partial \mcM$ is null at all its differentiability points.
\end{theorem}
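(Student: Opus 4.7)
I argue by contradiction. Suppose that in some continuous past extension, $\partial\mcM$ is differentiable at a point $p$ with a tangent hyperplane $\Pi$ which is not null. Achronality of the boundary rules out timelike tangents: if $\Pi$ contained a timelike vector $v$, differentiability would produce two boundary points $p\pm \epsilon v+o(\epsilon)$ causally related by a short curve along $v$, contradicting achronality. Hence $\Pi$ must be spacelike. I will show this is incompatible with the existence of a globally expanding singularity.

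\emph{Local setup.} Choose coordinates $(x^0,\dots,x^n)$ near $p$ in the extension with $p$ the origin, $\hat g|_p$ Minkowski, and $\Pi=\{x^0=0\}$. By continuity of $\hat g$ and differentiability of $\partial\mcM$ at $p$, I can shrink to a coordinate ball $\hat B$ of radius $r$ such that (i) the $\hat g$-causal cones are uniformly close to the Minkowski ones and (ii) the boundary meets $\hat B$ only inside a thin double spacelike cone $\{|x^0|\le \epsilon|\vec x|\}$, while $\mcM\cap\hat B$ contains the upper cone $\{x^0>\epsilon|\vec x|\}$. By definition of past extension there is a $\hat g$-timelike curve ending at $p$ whose interior lies in $\mcM$; the corresponding integral curve of $\nabla t$ in $\mcM$ has $t\to 0$ as it approaches $p$. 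Fix a small spacelike disk $D\subset \mcM\cap\hat B$ transverse to this curve, and consider the family $\mathcal{F}$ of past-directed integral curves of $-\nabla t$ issued from $D$.

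\emph{A trapped tube.} The key geometric claim is that, for $D$ small enough, every member of $\mathcal{F}$ remains in $\overline{\hat B}$ until $t$ reaches $0$. These curves are $\hat g$-timelike, and by (i) their Minkowski slopes $|d\vec x/dx^0|$ are bounded by some $k<1$ uniformly on $\hat B$. If one such curve were to exit $\hat B$ through the ``side'' $\{|\vec x|=r\}$ rather than through the spacelike slab containing $\partial\mcM$, its total past-directed $\vec x$-excursion would have to exceed $r-\mathrm{diam}(D)$ while its $x^0$-drop is at most $r$; choosing $D$ of radius $<(1-k)r$ rules this out. So there is an open $A\subset\hyp$ (the spatial footprint of $\mathcal{F}$) and $\tau_1>0$ with $\iota(\{\tau\}\times A)\subset\overline{\hat B}$ for all $\tau\in(0,\tau_1]$.

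\emph{Contradiction.} Since $\{\tau\}\times A$ is spacelike in $(\mcM,g)$, its image in the nearly-Minkowski chart is a spacelike hypersurface contained in $\overline{\hat B}$, hence a graph $x^0=f_\tau(\vec x)$ with $|\nabla f_\tau|\le C<1$ uniformly in $\tau$. Consequently its intrinsic Riemannian diameter is comparable to the Euclidean diameter of its projection, and in particular bounded by a constant depending only on $r$. Pulling back via the isometry $\iota$ yields a uniform bound on the diameter of $\{\tau\}\times A$ inside $\{\tau\}\times\hyp$, contradicting the expanding singularity hypothesis applied to $A$. The main obstacle is the trapping step: while it is morally clear that a nearly spacelike boundary pinches past-directed timelike curves in the way a spacelike Minkowski hyperplane does, executing this rigorously, and in particular controlling integral curves of $-\nabla t$ (whose relation to the ambient chart coordinates is only known through the near-Minkowski bounds on $\hat g$), is the technical heart of the argument and is where the differentiability of $\partial\mcM$ at $p$ is genuinely used.
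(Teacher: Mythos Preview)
Your overall strategy matches the paper's: argue by contradiction at a spacelike differentiability point, set up a near-Minkowski chart with the boundary nearly horizontal, trap a family of Cauchy slices in the chart, and derive a uniform diameter bound contradicting the expanding hypothesis. The paper proceeds via the weaker Definition~\ref{def:expanding} and Theorem~\ref{T12VI17.1}, using nested causal pasts $I^-(q)\subset I^-(p)$ and the key Lemma~\ref{lem:distbound}; you instead invoke Definition~\ref{def:expanding_global} directly by following a tube of integral curves of $-\nabla t$ from a disk $D$. That is a legitimate alternative route, and your trapping argument for the tube is essentially sound.

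There is, however, a genuine gap in your ``Contradiction'' paragraph. You assert that because $\iota(\{\tau\}\times A)$ is a spacelike graph $x^0=f_\tau(\vec x)$ with bounded slope, ``its intrinsic Riemannian diameter is comparable to the Euclidean diameter of its projection''. This inference fails unless the projection $P_\tau\subset(-r,r)^n$ is convex (or at least star-shaped with controlled eccentricity): a bounded-slope graph over a long thin U-shaped domain has small Euclidean diameter but arbitrarily large intrinsic diameter. Since the identification of $A$ with $P_\tau$ is via integral curves of $\nabla t$---which are \emph{not} the vertical lines of the chart---there is no reason for $P_\tau$ to remain convex as $\tau\to 0$, and you give no argument to that effect.

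The paper circumvents exactly this issue. Its Lemma~\ref{lem:distbound} shows that the \emph{full} Cauchy surface $\hyp$ meets the chart in a graph over the entire cube $(-\varepsilon,\varepsilon)^d$ (using past-inextendibility of the vertical coordinate lines $\sigma_{\underline x}$), so that the Euclidean straight segment between any two projected points lifts to a curve in $\iota(\hyp)$; the nested cones $I^-(q)\subset I^-(p)$ then guarantee that this segment lies in $I^-(p,M)\cap\hyp$. Your argument can be repaired along the same lines: rather than bounding the intrinsic diameter of $\{\tau\}\times A$, show that for $\tau$ small enough the full level set $\{t=\tau\}$ intersects every vertical line in $\hat B$ (this uses that vertical lines are past-inextendible timelike curves in $\mcM$, hence have $t\to 0$), so that $\iota(\{t=\tau\})\cap\hat B$ is a graph over the convex set $(-r,r)^n$, and then connect points of $\{\tau\}\times A$ by straight segments lifted into $\{\tau\}\times\hyp$. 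But this is precisely the substance of Lemma~\ref{lem:distbound}, and you have not supplied it.
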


Theorem \ref{T12VI17.1-} is an immediate consequence of Theorem~\ref{T12VI17.1} below, which is proved under a condition weaker, but somewhat more involved, than that of Definition~\ref{def:expanding_global}.

As such, it also holds:

\begin{proposition}
	Under the hypotheses of Theorem~\ref{T12VI17.1}, every spacelike hypersurface in the extension which intersects the boundary $\partial \iota(\mcM)$ of the image $\iota(\mcM)$ also intersects $\iota(\mcM)$ itself.
\end{proposition}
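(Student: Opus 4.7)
The plan is to argue by contradiction using the null-generator structure of the achronal boundary together with the spacelike bound on $\Sigma$. Assume $\Sigma$ is spacelike, $p \in \Sigma \cap \partial\iota(\mcM)$, and $\Sigma \cap \iota(\mcM) = \emptyset$. In a small coordinate chart $(t,x^1,\ldots,x^{n-1})$ around $p=(0,0)$ with $\partial_t$ timelike, both $\Sigma$ and $\partial\iota(\mcM)$ can be written as graphs over the spatial coordinates: $\Sigma = \{t = \sigma(x)\}$ with $\sigma$ smooth and $|\nabla\sigma(0)|_h<1$ in the induced spatial norm (strict spacelike bound), and $\partial\iota(\mcM) = \{t = b(x)\}$ with $b$ Lipschitz. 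In a past extension $\iota(\mcM)$ lies on the future side of $\partial\iota(\mcM)$, so $\iota(\mcM)\cap U = \{t > b(x)\}$, and the assumption $\Sigma\cap\iota(\mcM) = \emptyset$ reads $\sigma \le b$ locally, with equality at $x=0$.

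Next I invoke the standard fact that an achronal boundary is ruled by null geodesic generators, past-inextensible within the boundary when the boundary is of the ``past'' type naturally associated with a past extension. Thus through $p$ passes a null generator $\gamma(s) = (t(s), x(s))$ of $\partial\iota(\mcM)$, with $\gamma(0)=p$, defined and staying in $\partial\iota(\mcM)$ on some interval $s \in [-s_0, 0]$, normalised so that $\dot\gamma(0)$ is future-directed null: $\dot t(0) = |\dot x(0)|_h > 0$ and $t(s) = b(x(s))$ on $[-s_0,0]$. A first-order Taylor expansion gives
\begin{equation*}
\sigma(x(s)) - b(x(s)) \;=\; s\bigl[\,\nabla\sigma(0)\cdot\dot x(0) - \dot t(0)\,\bigr] + O(s^2),
\end{equation*}
and Cauchy--Schwarz applied with $|\nabla\sigma(0)|_h < 1$ together with the null relation $\dot t(0) = |\dot x(0)|_h$ makes the bracket strictly negative. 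Hence for all sufficiently small $s < 0$ one has $\sigma(x(s)) > b(x(s))$, so the point $(\sigma(x(s)), x(s))$ lies on $\Sigma$ and satisfies $t > b(x)$, i.e.\ it belongs to $\iota(\mcM)\cap U$, contradicting $\Sigma\cap\iota(\mcM) = \emptyset$.

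The main obstacle I anticipate is the justification of the backward-extending null generator at $p$: one needs $\partial\iota(\mcM)$ to be an achronal boundary whose generators are past-inextensible inside the boundary, which is the input that should come from the definition of past extension (Definition~\ref{def:fpextension}) combined with the achronality already established earlier in the paper. It is worth noting that the full strength of Theorem~\ref{T12VI17.1-}, namely that $\partial\iota(\mcM)$ is null at \emph{all} of its differentiability points, is not literally invoked in the estimate above --- the null character of the generator's tangent at $p$ is automatic. The proposition can thus be read as an additional, independent consequence of the structural setup of past extensions to which Theorem~\ref{T12VI17.1} applies.
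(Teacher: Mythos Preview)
Your argument has a genuine gap, and in fact it cannot work as stated because it never uses the expanding-singularity hypothesis. That hypothesis is essential: take $\mcM=\{t>0\}$ inside Minkowski space $\tilde{\mcM}=\R^{1+n}$. This is a past extension in the sense of Definition~\ref{def:fpextension} (no future-directed timelike curve in $\{t>0\}$ can reach $\{t=0\}$), the boundary $\partial\iota(\mcM)=\{t=0\}$ is achronal, and yet the spacelike hypersurface $\Sigma=\{t=0\}$ meets $\partial\iota(\mcM)$ without meeting $\iota(\mcM)$. So any proof that does not invoke the expanding singularity somewhere must be wrong.

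The specific failure is your invocation of a null generator of $\partial\iota(\mcM)$ through $p$. Lemma~\ref{lem:one_dir} only tells you that $\partial\iota(\mcM)$ is an achronal topological hypersurface; such hypersurfaces are \emph{not} in general ruled by null geodesics (the example $\{t=0\}$ above has none). The standard generator theorem applies to sets of the form $\partial I^\pm(S)$ at points outside $\bar S$, and there is no reason for $\partial\iota(\mcM)$ to arise this way. Even granting Theorem~\ref{T12VI17.1} (nullity at differentiability points), this neither produces an actual null geodesic lying in the boundary nor says anything at a non-differentiability point $p$. Finally, the extension metric is only $C^0$, so the existence and regularity of null geodesics you need for your Taylor estimate are themselves unavailable.

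The paper's proof proceeds quite differently and does use the expanding-singularity hypothesis: one assumes a spacelike hypersurface $\Sigma_{p_\partial}\subset\tilde{\mcM}\setminus\iota(\mcM)$ through $p_\partial$, chooses coordinates in a neighbourhood $\tilde O$ so that $\Sigma_{p_\partial}$ (rather than $\partial\iota(\mcM)$) is ``almost horizontal'', and observes that for a suitable $p\in\iota(\mcM)\cap\tilde O$ every past-directed curve from $p$ must cross $\Sigma_{p_\partial}$ before leaving $\tilde O$, so $I^-(p,\iota(\mcM))\subset\tilde O$. One then applies Lemma~\ref{lem:distbound} exactly as in the proof of Theorem~\ref{T12VI17.1} to get a uniform diameter bound on $I^-(q,\mcM)\cap\{t=t_i\}$ inside $I^-(p,\mcM)\cap\{t=t_i\}$, contradicting Definition~\ref{def:expanding}.
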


Some further results in the same spirit  concerning general extensions are also established, see Theorems~\ref{T16VI17.1} and \ref{T27VI17.1} below.

We suspect that null extensions cannot occur either under the hypotheses above, and thus such space-times are inextendible, but we have not been able to establish this in general. However we have:

\begin{proposition}
  \label{P30X17}
Under the hypotheses of Theorem~\ref{T12VI17.1-}, $(M,g)$ has no  extensions with a continuous metric and a compact boundary.
\end{proposition}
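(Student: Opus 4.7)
The plan is to derive a contradiction from the Proposition stated immediately before Proposition~\ref{P30X17}, which asserts that every spacelike hypersurface in the extension meeting $\partial \iota(\mcM)$ also meets $\iota(\mcM)$. Suppose, for contradiction, that a continuous past extension $\iota:\mcM\hookrightarrow\hat M$ exists with compact boundary $B := \partial \iota(\mcM)$. The strategy is to exhibit a spacelike hypersurface in $\hat M$ which meets $B$ but is disjoint from $\iota(\mcM)$.

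First I would construct a smooth function $\tau$ on a relatively compact open neighbourhood $V$ of $B$ in $\hat M$ with $d\tau$ everywhere $g$-timelike and future-directed. Using compactness of $B$, one can cover $B$ by finitely many small coordinate charts, on each of which a local time coordinate is available (the continuous metric being close to a flat reference on a small enough chart), and then patch these local choices together by a partition of unity, exploiting that any positive-coefficient combination of future-timelike covectors is again future-timelike. Shrinking $V$ if necessary, I would also arrange, using the defining property of a past extension (Definition~\ref{def:fpextension})---namely that boundary points are past endpoints of future-directed timelike curves entering $\iota(\mcM)$---that every $q \in V\cap\iota(\mcM)$ admits some $p \in B$ with $p \in I^-(q)$.

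With $\tau$ and $V$ in hand, compactness of $B$ ensures that $\tau|_B$ attains its minimum value $\tau_0$ on a nonempty compact set $K \subset B$. Define
\[
\Sigma \;:=\; \{\, q\in V : \tau(q) = \tau_0\,\}.
\]
Then $\Sigma$ is a smooth spacelike hypersurface of $\hat M$ (a level set of a function with timelike differential), and $K\subset \Sigma\cap B\neq\emptyset$. If $q\in V\cap\iota(\mcM)$, the shrinking step supplies some $p\in B$ with $p\in I^-(q)$; since $\tau$ increases strictly along future-directed timelike curves, $\tau(q)>\tau(p)\ge \tau_0$, so $q\notin\Sigma$. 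Thus $\Sigma$ is a spacelike hypersurface in the extension which meets $B$ but misses $\iota(\mcM)$, contradicting the quoted Proposition.

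The main obstacle lies in the preparatory first step. For a merely continuous extension metric one must verify that the partition-of-unity patching yields a genuine time function, and that past-directed timelike curves from points of $\iota(\mcM)$ sufficiently close to $B$ actually reach $B$ inside $V$ rather than merely asymptoting to it; both points hinge on the precise formulation of a past extension and on a careful handling of the $C^0$ causal structure near a compact boundary.
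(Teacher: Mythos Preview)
Your strategy is genuinely different from the paper's, and the reduction to the preceding Proposition is an attractive idea. The paper does \emph{not} argue by producing a spacelike hypersurface disjoint from $\iota(\mcM)$. Instead it works directly with the Cauchy surfaces $\hyp_s=\{t=s\}$ of $\mcM$: using a flow by a timelike vector field (borrowed from~\cite{chrusciel_1993}) together with compactness of $\partial\iota(\mcM)$, it shows that for all small $s$ the image $\iota(\hyp_s)$ lies inside a fixed finite union of near-Minkowski charts covering $\partial\iota(\mcM)$. An argument parallel to Lemma~\ref{lem:distbound} then gives a uniform upper bound on $\diam(\hyp_s)$, directly contradicting Definition~\ref{def:expanding_global}. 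So the paper never constructs a time function on the extension and never invokes the spacelike-hypersurface Proposition; it contradicts the \emph{globally} expanding hypothesis rather than the weaker one you use.

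The gap you flag in your last paragraph is real and is precisely the point where your argument is incomplete. The partition-of-unity step does not work as written: for $\tau=\sum_i\chi_i t_i$ one has $d\tau=\sum_i\chi_i\,dt_i+\sum_i t_i\,d\chi_i$, and only the first sum is a nonnegative combination of future-timelike covectors; the second sum can be arbitrary and will in general destroy timelikeness of $d\tau$. (Convex combinations of future-timelike \emph{vectors} patch to give a timelike vector field, but that does not produce a function with timelike differential.) There are ways to manufacture a smooth temporal function on a neighbourhood of a compact achronal hypersurface in a $C^0$ spacetime, but none of them is the one-line argument you sketch, and you would need to supply one. By contrast, the paper's route sidesteps this entirely: the smooth spacelike hypersurfaces it needs are already available inside $\mcM$ as the level sets $\hyp_s$, and the only ``patching'' it does is a finite cover of the compact boundary by coordinate charts, used solely to bound diameters.

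If you do fill that gap, the rest of your argument goes through (the verification that every $q\in V\cap\iota(\mcM)$ has a point of $B$ in its past is straightforward once the $O_i$ are chosen so that $B\cap O_i$ is a Lipschitz graph and $\iota(\mcM)\cap O_i$ lies above it, which follows from Lemma~\ref{lem:one_dir}). Your route would then yield the conclusion from the weaker hypotheses of Theorem~\ref{T12VI17.1} rather than the ``globally expanding'' ones actually assumed in Proposition~\ref{P30X17}.
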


The proof of Proposition~\ref{P30X17} can be found in Section \ref{Sproofs}.

Recall that electrovacuum space-times with compact Cauchy horizons have been studied in~\cite{VinceJimcompactCauchy,VinceJimcompactCauchyCMP}, and that,  in space-times in which the metric is $C^3$-extendible~\cite{MinguzziArea,LarssonArea}, for such metrics compactness of the horizon implies its differentiability. We emphasise that our arguments do not need such results.

\medskip
It should be recognised that our definition of expanding space-times is tied to the choice of a time-function $t$, and large deformations of a good time function, if one exists, will not preserve the condition. In particular   it might be very difficult to determine whether or not a given space-time, presented in a coordinate system where the  conditions of Definition~\ref{def:expanding_global} are \emph{not} met, admits a time-function which will satisfy the conditions.  But our results here give some geometric meaning to the notion: space-times extendible through a compact Cauchy horizon, or past-extendible through a spacelike boundary, will not be expanding in the sense of Definition~\ref{def:expanding_global} no matter what time function is used.

\section{Conventions and definitions}

\medskip
We use the standard definition of the Riemannian diameter of a set $A$ in the set $B\supseteq A$, where both $A$ and $B$ are subsets of a Riemannian manifold $(M,g)$
\begin{equation}\label{riem_def}
\diam(A,B):=\sup_{x, y\in A}\inf_{\substack{γ:[0,1]\to B\\ γ(0)=x\,,\thinspace γ(1)=y}}\int_{0}^{1}|\dot{γ}(s)|_g \ds \,.
\end{equation}
Our remaining definitions follow Sbierski~\cite{SbierskiSchwarzschild}, in particular:

\begin{definition}
  \label{d18VI17.1}
 A \emph{$C^0$-extension} of a spacetime $(M, g)$, where $M$ is a smooth manifold and $g$ a Lorentzian metric, is a spacetime $(\tilde M, \tilde{g})$ of the same dimension, with $\tilde M$ again a smooth manifold and $\tilde{g}$ a continuous Lorentzian metric, together with an isometric embedding $ι: M \to \tilde M$ such that $ι(M)$ is a proper subset of $\tilde M$.
\end{definition}

The timelike futures and pasts $I^\pm$ will be defined using piecewise smooth timelike curves in both $M$ and $\tilde M$.

We define the future/past boundary of a spacetime as in~\cite{GallowayLing}, namely:

\begin{definition}\label{def:fpboundary}
The \emph{future, respectively past, boundary of $\mcM$}, denoted $∂^+ ι(M)$, respectively $∂^- ι(M)$, is the set of points $p\in ∂ι(\mcM)$ such that there exists a future/past directed timelike curve $γ: [0,1]\to \tilde{\mcM}$ with $γ([0,1))\subset ι(\mcM)$ and $γ(1)=p$.
\end{definition}

\begin{definition}\label{def:fpextension}
An extension will be called   \emph{past}, respectively \emph{future}, if $∂^+ι(M)=\emptyset$, respectively $∂^-ι(M)=\emptyset$.
\end{definition}

We will also need a definition closely related to, but somewhat weaker than  Definition~\ref{def:expanding_global}:

\begin{definition}\label{def:expanding}
	We shall say that a globally hyperbolic space-time $(\mcM,g)$ contains an \emph{expanding singularity towards the past} if there exists a time function $t$ ranging over $(0,\infty)$ and a real number $t_L>0$ such that for all pairs of points $p\in \{t<t_L\}$ and $q∈ I^-(p,M)$ there exists a sequence $t_i$ decreasing to zero such that the diameter of $I^-(q,M)\cap\{t=t_i\}$ in $I^-(p,M)\cap \{t=t_i\}$ tends to infinity as $t_i\to 0$.
\end{definition}

This is illustrated in Figure~\ref{fig:expanding}.

	\begin{remark}
We note that a globally expanding singularity in the sense of Definition~\ref{def:expanding_global} is expanding in the sense of Definition~\ref{def:expanding}. For this, let us identify the spacetime $M$ with $\R\times \hyp$ by flowing along the gradient $\nabla t$ of $t$. For a subset $\Omega$ of $\hyp$ set $\Omega(t)=\{t\} \times \Omega$,  and let
$$
 \Omega_i(t_i) := I^-(q,M)\cap\{t=t_i\}
 \,.
$$
We then have, for $j>i$,
$$
 \Omega_i(t_j)  \subseteq I^-(q,M)\cap\{t=t_j\} \equiv \Omega_j(t_j)
  \,.
$$
By its definition, given in \eqref{riem_def}, the diameter appearing in Definition \ref{def:expanding}, namely
$$
 %\diam(\Omega_i, I^-(p,M)\cap \{t=t_i\}) =
 \diam(\Omega_i(t_i), I^-(p,M)\cap \{t=t_i\})
 \,,
$$
%,
is bounded from below by the diameter of $\Omega_i(t_i)$ in the whole level set,  $\diam(\Omega_i(t_i),\{t=t_i\})$.
 By Definition \ref{def:expanding_global} of a globally expanding singularity the diameter of  $\Omega_0(t_i)$ in $\{t=t_i\}$ diverges as $t_i$ tends to zero, and the inclusion $\Omega_0(t_i)\subset \Omega_i(t_i)$ proves the claim.

	\end{remark}
%	\begin{remark}
% \ptc{old version}
%		We note that a globally expanding singularity in the sense of Definition~\ref{def:expanding_global} is expanding in the sense of Definition~\ref{def:expanding}: Indeed we have, for $j>i$,
%%
%$$
% \Omega_i:= I^-(q,M)\cap\{t=t_i\}\subseteq I^-(q,M)\cap\{t=t_j\} \equiv \Omega_j
%  \,,
%$$
%%
%where we identify $\Omega_i$ with a subset of $\Omega_j$ by flowing $\Omega_j$ along $\nabla t$ to $\{t=t_i\}$. \blue{By its definition, given in \eqref{riem_def}, the diameter appearing in Definition \ref{def:expanding}, $\diam(\Omega_i, I^-(p,M)\cap \{t=t_i\})$, is bounded from below by the diameter of $\Omega_i$ in the whole level set,  $\diam(\Omega_i,\{t=t_i\})$.} By Definition \ref{def:expanding_global} of a globally expanding singularity the diameter of $\Omega_0$ in $\{t=t_i\}$ diverges, and the inclusion $\Omega_0\subset \Omega_i$ proves the claim.
%	\end{remark}
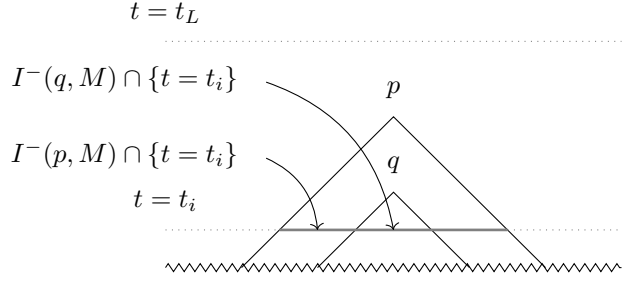
\begin{figure}
	\centering
\begin{tikzpicture}
\draw[decorate,decoration={zigzag, amplitude=1.5, segment length=4, post length=0}] (-1,0)--(5,0);
\draw (1,0)--(2,1) node[label=above:{$q$}] {} --(3,0);
\draw (0,0) -- (2,2) node[label=above:{$p$}] {} -- (4,0);
\draw [gray,text=black,dotted] (-1,0.5) node[label=above:{$t=t_i$}] {}--(0.5,0.5) (3.5,0.5)--(5,0.5);
\draw [gray,line width=1pt] (0.5, 0.5) -- (1, 0.5) node[coordinate] (outer) {} -- (2,0.5) node[coordinate] (inner) {} -- (3.5,0.5);

\draw[color=gray,text=black, dotted] (-1,3) node[label=above:{$t=t_L$}]{} -- (5,3) ;
\node[label=left:{$I^-(q,M)\cap\{t=t_i\}$}] (innerlab) at (0.2,2.5) {};
\node[label=left:{$I^-(p,M)\cap \{t=t_i\}$}] (outerlab) at (0.2,1.5) {};
\path[->] (outerlab) edge [bend left] (outer);
\path[->] (innerlab) edge [bend left] (inner);
\end{tikzpicture}
\caption{An illustration of the definition of an \emph{expanding singularity}.}
\label{fig:expanding}
\end{figure}

\section{Nonexistence of spacelike boundaries}
 \label{s18VI17.11}

\medskip

We have the following result, which immediately implies Theorem~\ref{T12VI17.1-}:

\begin{theorem}
  \label{T12VI17.1}
Suppose that $(\mcM,g)$ contains an expanding singularity towards the past.
 Then in every continuous past
  extension of $\mcM$ the boundary $\partial \mcM$ is null wherever differentiable.
\end{theorem}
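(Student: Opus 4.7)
The strategy is to argue by contradiction: assume that some differentiability point $p_0\in \partial \iota(\mcM)$ has a non-null tangent hyperplane. Since the past boundary is achronal, that hyperplane must then be spacelike, and the plan is to manufacture from this hypothesis a pair $p\in\{t<t_L\}$, $q\in I^-(p,\mcM)$ that violates Definition~\ref{def:expanding}: the Riemannian diameter of $I^-(q,\mcM)\cap\{t=t_i\}$ inside $I^-(p,\mcM)\cap\{t=t_i\}$ should stay \emph{bounded} as $t_i\to 0$.

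Near $p_0$ I would pick coordinates $(x^0,\ldots,x^{n-1})$ on a neighborhood $U\subset \tilde{\mcM}$ with $p_0=0$ and $\tilde g(p_0)$ equal to the Minkowski metric $\eta$, arranged so that the tangent hyperplane to $\partial \iota(\mcM)$ at $p_0$ is $\{x^0=0\}$. Continuity of $\tilde g$ gives a small compact coordinate ball $K\subset U$ on which $\tilde g$ is uniformly close to $\eta$, and differentiability of the (achronal) boundary at $p_0$, together with $p_0\in\partial^-\iota(\mcM)$, yields that in $K$ we have $\iota(\mcM)\cap K=\{x^0>f(x^1,\ldots,x^{n-1})\}\cap K$ for a Lipschitz function $f$ with $f(0)=0$ and $df|_0=0$. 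Now pick $p=(\varepsilon,0,\ldots,0)$ and $q=(\varepsilon/2,0,\ldots,0)$ in $\iota(\mcM)$ for small $\varepsilon>0$; since the metric is near-$\eta$ on $K$ and the boundary hugs $\{x^0=0\}$ to first order, both $J^-(p,\tilde{\mcM})\cap\overline{\iota(\mcM)}$ and $J^-(q,\tilde{\mcM})\cap\overline{\iota(\mcM)}$ are compact subsets of $K$.

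Next I would show that for all sufficiently small $t_i$ the set $I^-(p,\iota(\mcM))\cap\{t=t_i\}$ lies inside $K$. Using the past-extension property (which identifies $I^-(p,\iota(\mcM))$ with $I^-(p,\tilde{\mcM})\cap\iota(\mcM)$), if some sequence of such points failed to lie in $K$ along some $t_i\to 0$ one could extract an accumulation point in $\overline{J^-(p,\tilde{\mcM})}$; it cannot lie in $\iota(\mcM)$ itself (since $t\to 0$ there means exiting every compact of $\mcM$), so it must lie in $\partial^-\iota(\mcM)\cap\overline{J^-(p,\tilde{\mcM})}\subset K$, a contradiction. It then remains to bound the Riemannian diameter of $I^-(q,\iota(\mcM))\cap\{t=t_i\}$ inside $I^-(p,\iota(\mcM))\cap\{t=t_i\}$ uniformly in $t_i$. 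The point is that on $K$ the metric $\tilde g$ is uniformly equivalent to $\eta$, so the induced Riemannian metric on any spacelike slice lying in $K$ is uniformly comparable to the Euclidean metric in coordinates; moreover a Lorentzian push-up in the near-Minkowski regime shows that $I^-(p,\iota(\mcM))\cap\{t=t_i\}$ contains a uniform Euclidean neighborhood, of size $\sim\varepsilon/2$, of $I^-(q,\iota(\mcM))\cap\{t=t_i\}$ inside the slice, in which short coordinate segments join any two points of the latter set with total length $O(\diam_{\mathrm{eucl}}(K))$.

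The main obstacle is this last estimate: the Riemannian diameter is defined by paths that must remain \emph{inside} the possibly very irregular set $I^-(p,\iota(\mcM))\cap\{t=t_i\}$, and one must rule out long narrow necks as $t_i\to 0$. I expect this to be where the real work lies, and where the hypothesis that the tangent hyperplane at $p_0$ is spacelike (rather than null) is crucial: only in the near-Minkowski regime does the push-up from $I^-(q)$ to $I^-(p)$ produce a \emph{uniform} coordinate-neighborhood thickness independent of $t_i$. A secondary, more routine, difficulty is the identification $I^-(p,\iota(\mcM))=I^-(p,\tilde{\mcM})\cap\iota(\mcM)$ used above, which should follow by lifting a timelike curve in $\tilde{\mcM}$ back into $\iota(\mcM)$ using $\partial^+\iota(\mcM)=\emptyset$.
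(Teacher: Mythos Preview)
Your overall strategy matches the paper's: assume a spacelike tangent hyperplane at $p_0$, set up near-Minkowski coordinates in a neighbourhood $\tilde O$, pick $p$ and $q\in I^-(p,\iota(M))$ close to the boundary, and derive a uniform bound on the diameter of $I^-(q)\cap\{t=t_i\}$ inside $I^-(p)\cap\{t=t_i\}$, contradicting Definition~\ref{def:expanding}. You have also correctly located the crux of the argument and the role of the spacelike (rather than null) tangent plane.

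The gap is in your resolution of that crux. Your phrases ``uniform Euclidean neighborhood inside the slice'' and ``short coordinate segments'' presuppose a coordinate structure on $\{t=t_i\}$ compatible with the extension chart, but $\{t=t_i\}$ is \emph{not} a coordinate hyperplane in $K$: it is a priori an uncontrolled hypersurface of the extension, and there is no reason a coordinate segment should lie on it. This is exactly what the paper's Lemma~\ref{lem:distbound} supplies. The key observation is that the vertical coordinate lines $s\mapsto(s,\underline{x})$ are timelike and past-inextendible in $M$ (they run into $\partial\iota(M)$), hence each meets the Cauchy hypersurface $\{t=t_i\}$ exactly once, at $s=f_i(\underline{x})$; this exhibits $\iota(\{t=t_i\})\cap\tilde O$ as a smooth graph over the spatial coordinates. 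Crucially, since tangent vectors to the slice cannot be timelike and the metric is $\nu$-close to Minkowski, one gets a bound on $|\partial_j f_i|$ \emph{independent of $i$}, so the induced metric on the graph has uniformly bounded components. Two points of $I^-(q)\cap\{t=t_i\}$ are then joined by the straight segment in $\underline{x}$, lifted to the graph, with length bounded by a fixed constant; and the careful choice of $q$ --- so that its $C^-_{5/8}$-cone above $\{x^0>-\delta/10\}$ sits inside the $C^-_{5/6}$-cone of $p$ --- guarantees this lifted segment stays in $I^-(p)$. Your push-up intuition captures the cone nesting but not the graph representation, and without the latter neither the containment of the connecting curves in the slice nor the uniform comparison of induced and Euclidean metrics is justified. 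A minor additional point: you should also arrange $t<t_L$ on $\tilde O\cap\iota(M)$, which the paper does by first observing that achronality of $\partial\iota(M)$ keeps the level set $\{t=t_L\}$ away from $p_0$.
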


The following proposition gives some more information about extensions in the current context:

\begin{proposition}
	\label{p24VIII17.1}
	Under the hypotheses of Theorem~\ref{T12VI17.1}, every spacelike hypersurface in the extension which intersects $∂^-ι(M)$ also intersects $ι(\{t<t_L\})\subseteq ι(M)$.
\end{proposition}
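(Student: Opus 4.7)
\medskip
\textbf{Plan.} Let $p\in \Sigma\cap \partial^-\iota(M)$ and take the past-directed timelike curve $\gamma\colon[0,1]\to\tilde M$ provided by Definition~\ref{def:fpboundary}, with $\gamma([0,1))\subset\iota(M)$ and $\gamma(1)=p$. The first step is the standard observation that $t(\iota^{-1}(\gamma(s)))\to 0$ as $s\to 1$: otherwise the past-directed timelike curve $\iota^{-1}\circ\gamma$ would remain in a slab $\{t\geq t_0\}$ with $t_0>0$, be past-extendible in the globally hyperbolic $M$, and produce a continuous extension of $\gamma$ inside $\iota(M)$, contradicting $\gamma(1)=p\notin\iota(M)$. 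Hence $\gamma(s)\in \iota(\{t<t_L\})$ for all $s$ sufficiently close to $1$.

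Fix such an $s_0$, put $r:=\gamma(s_0)\in\iota(\{t<t_L\})$, and pick a small causally convex neighborhood $V$ of $p$ in $\tilde M$ on which $\Sigma\cap V$ is a smooth Cauchy hypersurface for $V$, splitting it into $V^+=I^+(\Sigma\cap V,V)$ and $V^-$; since $\gamma$ is past-directed and $\gamma(1)=p\in\Sigma$, we have $r\in V^+$. The proposition reduces to showing $\Sigma\cap V\cap \iota(M)\neq\emptyset$: any such intersection point $p'$ sufficiently close to $p$ satisfies $t(\iota^{-1}(p'))<t_L$ by the first-step argument applied to a past-directed timelike curve from $r$ through $p'$ and running into $\partial^-\iota(M)$, which forces $t\to 0$ along the continuation below $p'$ and hence $t(\iota^{-1}(p'))<t_L$ for $p'$ close enough to $p$.

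To prove $\Sigma\cap V\cap\iota(M)\neq\emptyset$ I argue by contradiction. If it were empty, the connected component of $\iota(M)\cap V$ containing $r$ would lie entirely in $V^+$; in Gaussian coordinates with $\Sigma\cap V=\{\tau=0\}$, the relevant component of $\partial\iota(M)\cap V$ is then the graph of a Lipschitz function $\tau=h(y)$ with $h\geq 0$ and $h(y_p)=0$, and $I^-(r,\tilde M)\cap\{\tau>0\}\cap V$ is a compact truncated wedge. For every $t_i$ small enough one checks that $\iota\bigl(I^-(\iota^{-1}(r),M)\cap\{t=t_i\}\bigr)$ sits inside this wedge, so that the Riemannian diameter of $I^-(\iota^{-1}(r),M)\cap\{t=t_i\}$ is uniformly bounded in $i$. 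Choosing any $q'\in I^-(\iota^{-1}(r),M)$ and using the inclusion $I^-(q',M)\cap\{t=t_i\}\subseteq I^-(\iota^{-1}(r),M)\cap\{t=t_i\}$, one obtains a uniform upper bound on the diameter appearing in Definition~\ref{def:expanding} applied to the pair $(\iota^{-1}(r),q')$, contradicting the expanding hypothesis.

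The main obstacle is the trapping step: verifying that $\iota\bigl(I^-(\iota^{-1}(r),M)\cap\{t=t_i\}\bigr)$ does lie inside the wedge $I^-(r,\tilde M)\cap\{\tau>0\}\subset V$ for all $t_i$ small, rather than escaping into other portions of $\iota(M)\subset\tilde M$. This requires combining the local graph description of $\partial\iota(M)$ near $p$ with the causal convexity of $V$ to argue that every cluster point of $\iota\bigl(I^-(\iota^{-1}(r),M)\bigr)$ as $t\to 0$ lies in $\partial\iota(M)\cap \overline{I^-(r,\tilde M)}$, a set contained in $V$ by the local coordinate choice.
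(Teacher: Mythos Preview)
Your overall strategy matches the paper's: assume $\Sigma$ misses $\iota(M)$ near $p_\partial$, trap $I^-(r,\iota(M))$ in a compact coordinate wedge, and invoke the diameter bound to contradict Definition~\ref{def:expanding}. Where you diverge is exactly at the point you flag as the ``main obstacle'', and there the paper's argument is far simpler than the cluster-point approach you sketch.

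The paper's observation is this. In near-Minkowskian coordinates on $\tilde O$ chosen so that $\Sigma$ (not $\partial\iota(M)$) is almost horizontal, the region $I_{\Sigma,p}:=I^-(p,\tilde M)\cap\{\text{above }\Sigma\}$ is, for $p$ close enough to $p_\partial$, entirely contained in $\tilde O$. Any past-directed timelike curve from $p$ in $\iota(M)$, viewed in $\tilde M$, lies in $I^-(p,\tilde M)$, starts above $\Sigma$, and can never touch $\Sigma$ since $\Sigma\cap\iota(M)=\emptyset$ by hypothesis. But the only way to leave the wedge $I_{\Sigma,p}$ while remaining in $I^-(p,\tilde M)$ is to cross $\Sigma$; hence the curve stays in $I_{\Sigma,p}\subset\tilde O$ for its whole life. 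This gives $I^-(p,\iota(M))\subset\tilde O$ outright---condition~(iv) of Lemma~\ref{lem:distbound}---after which the proof of Theorem~\ref{T12VI17.1} runs verbatim. No graph description of $\partial\iota(M)$, no connected-component bookkeeping, and no analysis of cluster points as $t\to 0$ is needed; the spacelike hypersurface $\Sigma$ itself serves as the barrier.

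Two smaller remarks. First, your appeal to a ``causally convex neighborhood'' in which $\Sigma\cap V$ is a smooth Cauchy hypersurface for $V$ is delicate when $\tilde g$ is only continuous; the paper works throughout with near-Minkowskian coordinate boxes, which always exist. Second, your two-step reduction---first show $\Sigma\cap V\cap\iota(M)\neq\emptyset$, then separately argue $t(\iota^{-1}(p'))<t_L$---is more than needed: the setup from the proof of Theorem~\ref{T12VI17.1} already arranges $t<t_L$ on all of $\tilde O\cap\iota(M)$ via achronality of $\partial\iota(M)$, so any local intersection point automatically lies in $\iota(\{t<t_L\})$.
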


For a general extension (\ie where $∂^+ι(M)$ might be non-empty) we have:

\begin{theorem}\label{T16VI17.1}
Suppose that $(\mcM,g)$ contains  an expanding singularity towards the past.
If the Cauchy hypersurface $\hyp$ is compact then the past boundary $∂^-ι(M)$ of $\mcM$ in every extension is null  wherever differentiable.
\end{theorem}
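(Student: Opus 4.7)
The plan is to run the proof of Theorem~\ref{T12VI17.1} at a differentiability point $p\in\partial^-\iota(M)$, using compactness of $\hyp$ to supply, in the general-extension setting, the one consequence of the past-extension hypothesis that is used there. Suppose for contradiction that at some differentiability point $p\in\partial^-\iota(M)$ the tangent hyperplane to $\partial\iota(M)$ is not null; by achronality of the past boundary it is then spacelike. As in the proof of Theorem~\ref{T12VI17.1}, I would construct a small spacelike disk $D\subset\tilde M$ through $p$ with $D\setminus\{p\}\subset\iota(M)$, tangent to $\partial\iota(M)$ at $p$ and bending into $\iota(M)$ elsewhere. Its pullback $\tilde D:=\iota^{-1}(D\setminus\{p\})$ is then a spacelike punctured disk in $(M,g)$ whose $g$-diameter is controlled by the $\tilde g$-diameter of $\overline{D}$ via the continuity of $\tilde g$.

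The crux of the argument of Theorem~\ref{T12VI17.1} is the assertion that every sequence $(y_n)\subset\tilde D$ with $\iota(y_n)\to p$ satisfies $t(y_n)\to 0$. Granting this, one picks $p_0\in\iota(\{t<t_L\})$ and $q_0\in I^-(p_0,M)$ such that the $y_n$ lie in $I^-(q_0,M)$ for $n$ large, and then a pair of sequences $\tilde x_n,\tilde y_n\in\tilde D$ with bounded mutual $g$-distance but with $t(\tilde x_n),t(\tilde y_n)=t_n\to 0$ contradicts the divergence of the diameter of $I^-(q_0,M)\cap\{t=t_n\}$ inside $I^-(p_0,M)\cap\{t=t_n\}$ furnished by Definition~\ref{def:expanding}. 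To establish $t(y_n)\to 0$ in our setting, note first that the sequence $y_n$ leaves every compact subset of $M$ (since $p\notin\iota(M)$ and $\iota$ is a homeomorphism onto its image); with $\hyp$ compact, the compact subsets of $M\cong(0,\infty)\times\hyp$ are characterised by compactness of their $t$-projection, so along a subsequence either $t(y_n)\to 0$ or $t(y_n)\to\infty$.

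To exclude the second alternative, I would choose $q\in\iota(M)$ along a past-directed timelike curve in $\tilde M$ witnessing $p\in\partial^-\iota(M)$, truncated so that $q$ lies in a convex coordinate neighbourhood $U$ of $p$ in $\tilde M$ on which $\tilde g$ is arbitrarily close to a constant-coefficient Lorentzian metric. Then $p\in I^-(q,\tilde M)$ and, for $n$ large, $\iota(y_n)\in U\cap I^-(q,\tilde M)$; since $\partial\iota(M)\cap U$ is close to a spacelike hyperplane and $\iota(M)\cap U$ fills its future side, the straight line from $\iota(y_n)$ to $q$ in local coordinates is timelike and stays entirely in $\iota(M)\cap U$. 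Pulling back via $\iota^{-1}$ yields $y_n\in I^-(\iota^{-1}(q),M)$, whence $t(y_n)<t(\iota^{-1}(q))$, contradicting $t(y_n)\to\infty$.

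The main obstacle I anticipate is this last step: transferring a chronological relation between two $\iota(M)$-points from $\tilde M$ back to $M$ when only a continuous metric is available on $\tilde M$ and $\partial\iota(M)$ may be complicated. The argument I sketched relies on the boundary being differentiable at $p$ (so that it is well-approximated by a spacelike hyperplane in a small enough neighbourhood) and on the convexity of that neighbourhood in $\tilde M$; compactness of $\hyp$ then enters precisely to convert this local control into the global conclusion $t(y_n)\to 0$ required to invoke Definition~\ref{def:expanding}.
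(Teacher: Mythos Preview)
Your reconstruction of the proof of Theorem~\ref{T12VI17.1} is off, and the discrepancy is where your argument breaks. That proof never builds a single spacelike disk $D$ tangent to the boundary; its engine is Lemma~\ref{lem:distbound}, which shows that for \emph{every} Cauchy hypersurface $\hyp'$ of $M$ (so in particular every level set $\{t=t_i\}$) the diameter of $I^-(q,M)\cap\hyp'$ measured within $I^-(p,M)\cap\hyp'$ is bounded by a constant independent of $\hyp'$. This is precisely the quantity that Definition~\ref{def:expanding} forces to diverge. The mechanism is that any Cauchy hypersurface, being spacelike, appears in the near-Minkowski chart as a graph with uniformly bounded slope, so its induced Riemannian metric is uniformly comparable to the Euclidean coordinate metric on the base.

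Your disk $\tilde D$ cannot replace this. Even if $\tilde x_n,\tilde y_n\in\tilde D\cap I^-(q_0,M)$ with $t(\tilde x_n)=t(\tilde y_n)=t_n$ have bounded mutual distance \emph{in $\tilde D$}, this says nothing about their distance in $I^-(p_0,M)\cap\{t=t_n\}$: a curve in $\tilde D$ joining them does not lie in the level set $\{t=t_n\}$ (on $\tilde D$ the function $t$ runs all the way down to $0$ near the puncture). And even had you controlled the right distance for one specific pair, that would not bound the diameter
\[
\sup_{r,s\in I^-(q_0,M)\cap\{t=t_n\}} d_{\,I^-(p_0,M)\cap\{t=t_n\}}(r,s)\,;
\]
you need the bound for \emph{all} pairs in $I^-(q_0,M)\cap\{t=t_n\}$, which is exactly what Lemma~\ref{lem:distbound} supplies.

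Finally, the paper's use of compactness of $\hyp$ is different from yours. It enters via Lemma~\ref{lem:compact_past_achronal} to prove that $\partial^-\iota(M)$ is locally achronal (a fact you invoke without argument); once that is in hand, the proof of Theorem~\ref{T12VI17.1} runs verbatim with $\partial\iota(M)$ replaced by $\partial^-\iota(M)$. Your detour through $t(y_n)\to0$ and the transfer of chronology from $\tilde M$ back to $M$ is neither in the paper nor needed: one simply shrinks the neighbourhood so that $t<t_L$ on $\tilde O\cap\iota(M)$ (possible by achronality) and applies Lemma~\ref{lem:distbound}.
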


\begin{proposition}
 \label{p18VI17.2}
Under the hypotheses of Theorem~\ref{T16VI17.1}, every spacelike hypersurface in the extension which intersects $∂^-ι(M)$ also intersects $ι(\{t<t_L\})\subseteq ι(M)$.
\end{proposition}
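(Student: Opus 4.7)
Let $p \in \Sigma \cap \partial^-\iota(M)$. By Definition~\ref{def:fpboundary} there is a past-directed timelike curve $\gamma: [0,1] \to \tilde M$ with $\gamma([0,1)) \subset \iota(M)$ and $\gamma(1) = p$. The lift $\gamma_M := \iota^{-1} \circ \gamma|_{[0,1)}$ is past-inextendible in $M$; since $t$ is a Cauchy time function with range $(0,\infty)$, global hyperbolicity forces $t(\gamma_M(s)) \to 0$ as $s \to 1$. I would therefore choose $s_0 \in [0,1)$ close enough to $1$ that $q := \iota^{-1}(\gamma(s_0))$ satisfies $t(q) < t_L$ and that $Q := \gamma(s_0)$ lies in a prescribed small open neighbourhood $V$ of $p$ in $\tilde M$.

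The aim is then to find a point $R \in \Sigma \cap \iota\bigl(I^-(q, M)\bigr)$; for such $R$ one has $r := \iota^{-1}(R) \in I^-(q, M)$, hence $t(r) < t(q) < t_L$ and $R \in \Sigma \cap \iota(\{t < t_L\})$, as required. To produce such an $R$, I would invoke Theorem~\ref{T16VI17.1}: the achronal (hence Lipschitz) hypersurface $\partial^-\iota(M)$ is null at each of its differentiability points, and these form a dense subset. Working in a Lorentz-normal chart centred at a differentiability point $p' \in \partial^-\iota(M)$ near $p$, the set $\iota(M)$ appears locally, to leading order, as the open half-space above the null tangent hyperplane of $\partial^-\iota(M)$ at $p'$, while $\Sigma$ is approximately a spacelike hyperplane through $p$. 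An elementary wedge computation in this chart then shows that $\Sigma \cap \iota(M)$ contains a nonempty ``half-disk'' of points arbitrarily close to $p'$, and that, for $Q$ chosen sufficiently close to $p$, straight coordinate segments from $Q$ to such points are past-directed timelike and remain in $\iota(M) \cap V$.

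The principal obstacle is the case in which $p$ is itself a non-differentiability point of $\partial^-\iota(M)$, where the Lipschitz boundary may form a corner that prevents $\Sigma$ from entering $\iota(M)$ in any small neighbourhood of $p$. I would circumvent this by perturbing the endpoint of $\gamma$ slightly inside $\partial^-\iota(M)$ so as to reach a nearby differentiability point, reducing to the preceding analysis. Compactness of the Cauchy surface $\hyp$ enters via the observation that any sequence $Q_n \to p$ in $\iota(M)$ with $t(\iota^{-1}(Q_n))$ bounded inside a compact subinterval of $(0,\infty)$ would admit a convergent subsequence in $M$, forcing $p \in \iota(M)$ and yielding a contradiction; this guarantees that the local wedge used above lies entirely in $\iota(\{t < t_L\})$.
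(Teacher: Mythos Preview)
Your approach differs fundamentally from the paper's, and it has a genuine gap at the step you yourself flag as ``the principal obstacle''.

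The paper does \emph{not} deduce Proposition~\ref{p18VI17.2} from the nullness of the boundary. Instead it argues by contradiction and uses the expanding-singularity hypothesis \emph{directly}, exactly as in the proof of Proposition~\ref{p24VIII17.1}: assuming $\Sigma\cap\iota(\{t<t_L\})=\emptyset$, one chooses coordinates on a neighbourhood $\tilde O$ of $p_\partial$ so that $\Sigma$ (not $\partial^-\iota(M)$) is almost horizontal, picks $p\in\iota(M)\cap\tilde O$ with $t<t_L$ just above $\Sigma$, and observes that every past-directed timelike curve from $p$ in $\iota(M)$ stays in $\{t<t_L\}$ and therefore cannot meet $\Sigma$; hence $I^-(p,\iota(M))$ lies in the portion of $\tilde O$ above $\Sigma$. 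This places one in the setting of Lemma~\ref{lem:distbound}, whose diameter bound contradicts Definition~\ref{def:expanding}. The geometry of $\partial^-\iota(M)$ at $p_\partial$ plays no role.

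Your argument, by contrast, invokes only the conclusion of Theorem~\ref{T16VI17.1} (nullness at differentiability points) and never re-uses the expanding hypothesis. That conclusion is too weak for what you want: nothing established so far rules out that $\partial^-\iota(M)$ looks, near $p_\partial$, like the boundary of a future light cone, e.g.\ (in a near-Minkowski chart) $\{x^0=|\underline x|\}$ with $\iota(M)=\{x^0>|\underline x|\}$. This is an achronal Lipschitz hypersurface, null at every differentiability point, yet the spacelike hypersurface $\Sigma=\{x^0=0\}$ through the vertex $p_\partial=0$ meets $\iota(M)$ nowhere in any neighbourhood of $p_\partial$. Your perturbation does not help: moving along the boundary to a nearby differentiability point $p'$ is irrelevant because $\Sigma$ is fixed and need not pass through $p'$; in the example $\Sigma$ lies entirely outside $\iota(M)$ however close to $p'$ you look. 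The ``half-disk'' you predict simply is not there.

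A secondary issue: even where your wedge argument goes through, you still need the intersection points to satisfy $t<t_L$. Your compactness argument shows that $t\circ\iota^{-1}$ cannot stay in a compact subinterval of $(0,\infty)$ along sequences approaching $p_\partial$, but in the setting of Theorem~\ref{T16VI17.1} (a general, not a past, extension) this does not by itself exclude $t\to\infty$, so the wedge is not automatically contained in $\iota(\{t<t_L\})$ without further work.
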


Under weaker conditions on $\hyp$ we obtain a similar but more involved result:

\begin{theorem}
 \label{T27VI17.1}
Suppose that $(\mcM,g)$ contains  an expanding singularity towards the past. If either
\begin{itemize}
\item there exist constants $t_C>0, C>0$ such that for all timelike curves $γ$ in $M$ the intersection $I^+(γ,M)\cap \{t=t_C\}$ has diameter less than $C$,
\item or for all timelike curves $γ\subset M$ the intersection
\[
W:=I^+(I^-(I^+(\gamma,I^-(\hyp, M)),M),M) \cap \hyp
\]
(compare Figure~\ref{fig:W}) is precompact.
\end{itemize}
Then:
\begin{itemize}
\item every spacelike hypersurface in the extension which intersects $∂^-ι(M)$ also intersects $ι(\{t<t_L\})\subseteq ι(M)$
\item and for every point $p∈ ∂^-ι(M)$ there exists a neighborhood $\tilde{O}\subseteq \tilde{M}$ of $p$ such that the hypersurface
    \begin{equation}\label{20X17.1}
     \{q∈∂^-ι(M)\cap \tilde{O} \,|\, I^+(q,\tilde{O})\cap ∂^-ι(M)=\emptyset\}
     \,,
    \end{equation}
     which will be referred to as the \emph{futuremost part} of $∂^-ι(M)$ in $\tilde{O}$, is null wherever it is differentiable.
\end{itemize}
\end{theorem}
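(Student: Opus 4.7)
My plan is to follow the blueprint of Theorems~\ref{T12VI17.1} and~\ref{T16VI17.1}: the two finiteness hypotheses of Theorem~\ref{T27VI17.1} are designed to take over the role that the past-extension assumption or the compactness of~$\hyp$ played there. The two conclusions are proved in order. The first bullet (``reach-down'') is established by a contradiction in which Definition~\ref{def:expanding} forces a past set in $M$ to spread out without bound while the new hypothesis keeps the relevant portion of the corresponding slice in $\tilde M$ trapped in a controlled region. The second bullet (nullity of the futuremost part) then follows from the first by a local argument: a spacelike tangent at a differentiability point of the futuremost set $F$ would produce a spacelike hypersurface whose reach-down, guaranteed by the first bullet, contradicts the defining property of $F$.

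For the first bullet, let $\Sigma\subset\tilde M$ be spacelike with $p\in \Sigma\cap \partial^-\iota(M)$, assume for contradiction that $\Sigma\cap \iota(\{t<t_L\})=\emptyset$, and choose a small globally hyperbolic neighborhood $\tilde O$ of $p$ in which $\Sigma\cap\tilde O$ separates $\tilde O$ into a future side $\tilde O^+$ and a past side $\tilde O^-$. Pick a past-directed timelike curve $\gamma$ with $\gamma([0,1))\subset \iota(M)$ and $\gamma(1)=p$ (Definition~\ref{def:fpboundary}), and for $s$ close to~$1$ set $q_s:=\iota^{-1}(\gamma(s))\in M$; after taking $s$ close enough to~$1$ we may arrange that $\gamma(s)\in \tilde O^+$ and pick some $p'\in \{t<t_L\}\cap I^-(q_s,M)$. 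Applying Definition~\ref{def:expanding} to the pair $(p',q_s)$, the Riemannian diameter of $I^-(q_s,M)\cap\{t=t_i\}$ inside $I^-(p',M)\cap\{t=t_i\}$ diverges as $t_i\to 0$. Since $\Sigma$ avoids $\iota(\{t<t_L\})$, for $t_i<t_L$ the part of $\iota(\{t=t_i\})$ in $\tilde O$ lies entirely on one side of $\Sigma$; combining this with the diameter-bound hypothesis applied to a timelike curve $\tilde\gamma$ through $q_s$ (first variant), or with the precompactness of the set $W\subset \hyp$ in the Cauchy slice (second variant), yields a uniform bound on the diameter of the relevant part of $I^-(q_s,M)\cap\{t=t_i\}$, which contradicts the divergence above.

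For the second bullet, fix $p\in \partial^-\iota(M)$ and a small globally hyperbolic neighborhood $\tilde O\ni p$, and let $F$ denote the futuremost part in~\eqref{20X17.1}. By construction $F$ is achronal in $\tilde O$, hence the tangent hyperplane $T_qF$ at any differentiability point $q\in F$ is non-timelike, so it is either null or spacelike. Assume toward contradiction that $T_qF$ is spacelike, and construct a smooth spacelike hypersurface $\Sigma\subset \tilde O$ with $q\in \Sigma$ and $T_q\Sigma=T_qF$. The first bullet, applied to $\Sigma$ at the boundary point~$q$, produces an intersection $r\in \Sigma\cap \iota(\{t<t_L\})$; since $\Sigma$ and $F$ agree to first order at~$q$ and $F$ is the futuremost part of $\partial^-\iota(M)$ in $\tilde O$, tracing a past-directed timelike curve from~$r$ into $\iota(M)$ and pushing towards the boundary yields a point $q'\in \partial^-\iota(M)\cap I^+(q,\tilde O)$, contradicting $q\in F$. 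The main obstacle is this last compatibility step: converting the existence of~$r$ on $\Sigma$ into an actual boundary point of $\iota(M)$ strictly in the future of~$q$ in~$\tilde O$, without losing control of $\tilde O$ or of the tangency between $\Sigma$ and $F$, is where I expect the proof's technical weight to concentrate.
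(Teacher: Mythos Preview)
Your plan inverts the logical role of the two finiteness hypotheses. In the paper they are consumed entirely by Lemma~\ref{lem:horizon_achronal}: the precompactness of $W$ (or the uniform bound on $\diam(I^+(\gamma,M)\cap\{t=t_C\})$, which implies it) is what allows one to choose a neighborhood $\tilde O$ of $p\in\partial^-\iota(M)$ with $\tilde O\cap\iota(\overline W)=\emptyset$, and thereby to show that the futuremost part of $\partial^-\iota(M)$ in $\tilde O$ is a \emph{non-empty achronal topological hypersurface}. (Your remark that $F$ is ``achronal by construction'' is correct, but that alone does not give non-emptiness or the hypersurface structure needed to speak of differentiability points; Lemma~\ref{lem:horizon_achronal} does.) Once this is in place, \emph{both} bullets are obtained exactly as in Theorem~\ref{T12VI17.1} and Proposition~\ref{p24VIII17.1}: one verifies the hypotheses of the distance-bound Lemma~\ref{lem:distbound} in $\tilde O$, with the futuremost part (respectively $\Sigma$) playing the role of the ``almost horizontal'' set, and derives a uniform upper bound on the diameters appearing in Definition~\ref{def:expanding}. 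The finiteness hypotheses are never used to bound past diameters directly.

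Your first-bullet argument attempts precisely such a direct use and does not go through: the hypothesis bounds $\diam(I^+(\gamma,M)\cap\{t=t_C\})$ at the \emph{fixed} level $t_C$, which says nothing about $\diam(I^-(q_s,M)\cap\{t=t_i\})$ as $t_i\to 0$. (There is also a mismatch with Definition~\ref{def:expanding}: since $p'\in I^-(q_s,M)$ one has $I^-(p',M)\subset I^-(q_s,M)$, so the phrase ``diameter of $I^-(q_s,M)\cap\{t=t_i\}$ inside $I^-(p',M)\cap\{t=t_i\}$'' has the sets the wrong way round.) Your second-bullet reduction to the first is the genuine gap you anticipated: the first bullet only produces a point $r\in\Sigma\cap\iota(M)$, and there is no mechanism to convert this into a point of $\partial^-\iota(M)$ lying in $I^+(q,\tilde O)$; first-order tangency of $\Sigma$ with $F$ at $q$ gives no control on where $r$ sits relative to $q$, and $r\in\iota(M)$ is not by itself in tension with $q\in F$. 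The paper sidesteps this entirely by obtaining the nullity of the futuremost part directly from Lemma~\ref{lem:distbound}, not as a corollary of the first bullet.
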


\begin{figure}
	\centering
	\begin{tikzpicture}
	\definecolor{fillcolor}{rgb}{0.9,0.9,0.9}
	%\fill[fillcolor]  (5.1,1.7) -- (-5.1,1.7) -- (-3.4, 0) -- (3.4,0) -- cycle;
	\draw[decorate,decoration={zigzag, amplitude=1.5, segment length=4, post length=0}] (-6,0)--(6,0);
	\draw [black, dashed] plot [smooth] coordinates {(0,2) (0.1,1.2) (-0.1,0.5) (0,0)};
	\draw (0,2) node[label=above:{$\gamma$}] {} ;
	\draw[color=gray,text=black, dotted] (-6,1.7) node[label=above:{$\hyp$}]{} -- (6,1.7) ;
	
	\draw[color=gray] (1.7,1.7) -- (0,0) --(-1.7, 1.7);
	\draw[color=gray] (3.4,0)--(1.7,1.7)  (-3.4,0) --(-1.7, 1.7);
	\draw[color=gray] (5.1,1.7) -- (3.4,0)  (-3.4,0) --(-5.1, 1.7);
	
	\node[label=right:{$W$}] (Wlab) at (-2,2.5) {};
	\path[->] (Wlab) edge [bend right] (-3.5,1.75);
	
	\draw[color=black,line width=1.0pt] (-5.1,1.7) -- (5.1,1.7);
	
	\end{tikzpicture}
	\caption{The set $W$ appearing in the second condition of Theorem \ref{T27VI17.1}.}
	\label{fig:W}
\end{figure}
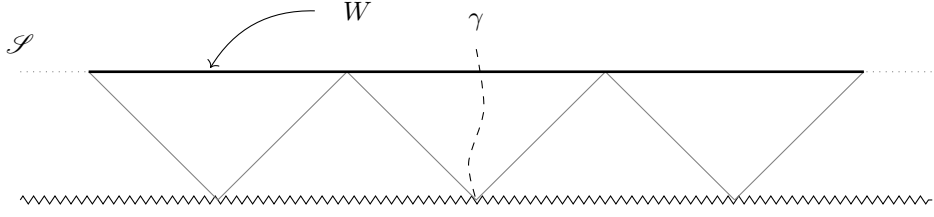

\begin{remark}
The first condition in Theorem \ref{T27VI17.1} implies the second one.
\end{remark}

\begin{figure}
	
	\begin{subfigure}{.49 \textwidth}
		\centering
		\begin{tikzpicture}
		\definecolor{fillcolor}{rgb}{0.9,0.9,0.9}
		\fill[fillcolor]  (0,0) -- (1,1) -- (0,2) -- (-1,1) -- cycle;
		\draw[-{Latex[length=2.5mm]},color=red] (0,0)--(-1,1);
		\draw[-{Latex[length=2.5mm]},color=blue] (-1,1)--(0,2);
		\draw[-{Latex[length=2.5mm]},color=red] (1,1)--(0,2);
		\draw[-{Latex[length=2.5mm]},color=blue] (0,0)--(1,1);

		\draw[color=gray,dashed] (-0.3,-0.3)--(0.3,-0.3)--(0.3,0.3)--(-0.3,0.3)--cycle;
		\node[label=right:{$\tilde{O}$}] (label) at (1,0) {};
		\path[->] (label) edge [bend left] (0.1,-0.1);
		
		\node at (0,1) {$ι(M)$};
		\end{tikzpicture}
	\end{subfigure}
	\begin{subfigure}{.49 \textwidth}
		\centering
		\begin{tikzpicture}[dot/.style={draw,circle,minimum size=1.5mm,inner sep=0pt,outer sep=0pt,fill=black}]
		\draw[color=gray,dashed] (-1,-1)--(-1,1)--(1,1)--(1,-1)--cycle;
		\draw[color=gray] (-1,-1)--(0,0)--(1,-1);
		\draw[line width=1.0pt] (-1,1)--(0,0)--(1,1);
		\node[dot, label=above:{$p$}] (0,0) {};
		
		\node[anchor=west,text width=3cm] (label) at (1.3,0.7) {futuremost part of $∂ι(M)\cap\tilde{O}$};
		\path[->] (label) edge [bend left] (0.4,0.4);
		\node at (1.3,-0.8) {$\tilde{O}$};
		\end{tikzpicture}
\end{subfigure}
	\caption{An example of a case where the futuremost part of $∂^-ι(M)$ in $\tilde{O}$ is not the same as $∂^-ι(M)\cap \tilde{O}$. The picture on the left shows the whole spacetime, with the extension given by gluing the edges according to the arrows. The picture on the right shows a neighborhood $\tilde{O}$ of the point $p\in∂^-(M)$. Here $∂^-ι(M)=∂^+ι(M)=∂ι(M)$ and there is no neighborhood $\tilde{O}$ of $p$ such that $∂ι(M)\cap\tilde{O}$ is achronal. The futuremost part, however, is achronal.}
	\label{fig:futuremost}
\end{figure}
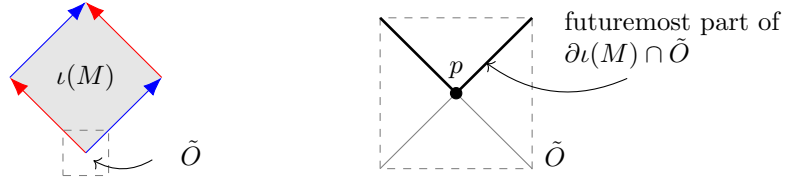

Figure~\ref{fig:futuremost} illustrates the difference between the \emph{futuremost part} of $∂^-ι(M)$, which appears in the second consequence of Theorem \ref{T27VI17.1}, and $∂^-ι(M)$ itself.

\section{Achronality of boundaries}
 \label{s2-X17.11}

\medskip

In order to prove the results above, we start by showing that the relevant sets are achronal. As such, achronal sets are locally Lipschitz continuous (for every point we can find a coordinate neighborhood on which the metric is close to Minkowski, which provides a Lipschitz bound) and therefore differentiable almost everywhere by Rademacher's theorem. By achronality they are null or spacelike wherever differentiable. To finish the proofs we will need to rule out spacelike tangent planes; this will be done in the next section.

We have:

\begin{lemma}\label{lem:one_dir}
The boundary of a future or past extension is an achronal topological hypersurface.
\end{lemma}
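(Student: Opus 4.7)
The plan is to reduce the lemma to the classical fact (essentially due to Penrose; cf.\ Hawking--Ellis, Proposition~6.3.1) that the topological boundary of a \emph{future set} in a Lorentzian manifold---that is, a set $F$ with $I^+(F)\subseteq F$---is a closed achronal locally Lipschitz topological hypersurface. By time-reversal it is enough to treat past extensions, so I may assume $\partial^+\iota(M)=\emptyset$ and show that $\partial\iota(M)$ has the desired structure.

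The heart of the argument is to verify that, in a past extension, $\iota(M)$ is itself a future set in $(\tilde M,\tilde g)$. For this I would first note that $\iota(M)$ is open in $\tilde M$ by invariance of domain, since $\iota$ is an isometric embedding between manifolds of equal dimension. Then, given $x\in\iota(M)$ and $q\in I^+(x,\tilde M)$, I would pick a future-directed timelike curve $\gamma:[0,1]\to\tilde M$ from $x$ to $q$ and introduce the ``first exit time''
\[
  s^* := \sup\bigl\{\, s\in[0,1] : \gamma([0,s])\subseteq \iota(M)\,\bigr\}.
\]
Openness of $\iota(M)$ and $x\in\iota(M)$ give $s^*>0$. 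If $s^*<1$, openness forces $\gamma(s^*)\in\partial\iota(M)$, and then $\gamma|_{[0,s^*]}$ (suitably reparametrized to $[0,1]$) shows that $\gamma(s^*)\in\partial^+\iota(M)$, contradicting $\partial^+\iota(M)=\emptyset$. If $s^*=1$, then either $q\in\iota(M)$ (which is what we want), or $q\in\partial\iota(M)$, in which case $\gamma$ itself exhibits $q\in\partial^+\iota(M)=\emptyset$, again a contradiction. Either way $q\in\iota(M)$, so $I^+(\iota(M),\tilde M)\subseteq\iota(M)$.

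Once $\iota(M)$ is known to be a future set, the lemma follows immediately from the cited standard result applied to $F=\iota(M)$: $\partial F=\partial\iota(M)$ is achronal and a locally Lipschitz topological hypersurface. In particular the local Lipschitz bound comes from choosing, at each boundary point, a chart in which $\tilde g$ is sufficiently close to Minkowski so that $\partial\iota(M)$ appears as a Lipschitz graph over a spacelike hyperplane.

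The only point requiring any real care is the ``first exit'' step: one must check that $\gamma(s^*)$ is genuinely a boundary point (not in $\iota(M)$ nor in $\tilde M\setminus\overline{\iota(M)}$), which is where openness of $\iota(M)$ is used, and that the restricted curve qualifies as a witness in the sense of Definition~\ref{def:fpboundary}. Neither of these is a serious obstacle; the substantive content of the lemma is packaged in the standard statement about boundaries of future sets, and the work here is simply to identify $\iota(M)$ as such a set in the past-extension case.
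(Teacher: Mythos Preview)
Your proof is correct and follows essentially the same route as the paper's: both hinge on the observation that in a past extension $\iota(M)$ is a future set in $\tilde M$ (the paper asserts this in one line, you justify it carefully via a first-exit-time argument), and both deduce achronality and the hypersurface property from this---you by invoking the Penrose/Hawking--Ellis black box, the paper by verifying achronality and edgelessness directly, citing O'Neill and noting that the proof there does not use differentiability of the metric. The one point worth tightening is exactly that $C^0$ caveat: since $\tilde g$ is only continuous, your citation of Hawking--Ellis should be accompanied by a remark that the argument goes through for $C^0$ metrics (openness of $I^\pm$ for piecewise-smooth timelike curves is the only ingredient needed); your closing sentence about charts close to Minkowski is the right idea, and the paper is explicitly careful on precisely this issue.
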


\begin{proof}
We assume a past extension. For a future extension, replace $I^\pm$ with $I^\mp$ in the following.

Achronality of $∂ι(M)$ is equivalent to $I^+(∂ι(M),\tilde{M})\cap ∂ι(M)=\emptyset$.  We first show that
$$
 I^+(∂ι(M),\tilde{M})\subseteq I^+(ι(M),\tilde{M})
  \,.
$$
%.
Indeed, for every point $p\in I^+(∂ι(M),\tilde{M})$ there is $q\in ∂ι(M)$ such that $p \in I^+(q,\tilde{M})$. Therefore $q\in I^-(p,\tilde{M})$ and, because $I^-$ is open and $q$ lies on the boundary of $ι(M)$, there is a point $ι(M)\ni r\in I^-(p,\tilde{M})$. This implies $p\in I^+(r,\tilde{M})$, which proves the claim.

As $\tilde{M}$ is a past extension, no future directed timelike curve leaves $M$, \ie $I^+(r,\tilde{M})\subseteq ι(M)\,\, ∀ r\in ι(M)$, and therefore
$$
 I^+(∂ι(M),\tilde{M})\subseteq ι(M)\subseteq \tilde{M}\setminus ∂ι(M)
$$
which implies that $∂ι(M)$ is achronal.

To show that it is a topological hypersurface we need to show $∂ι(M)\cap \operatorname{edge}( ∂ι(M))=\emptyset$ (see \cite[Section 14, Lemma 25]{oneill}, and note that the proof there does not use differentiability of the metric). We consider a point $p∈ ∂ι(M)$. By the above, $I^+(p,\tilde{M})\subseteq ι(M)$ and, as $∂^+(ι(M))=\emptyset$, $I^-(p,\tilde{M})\subset \tilde{M}\setminus ι(M)$. Therefore every (past directed) timelike curve from $I^+(p,\tilde{M})$ to $I^-(p,\tilde{M})$ has to cross $∂^-(M)\subseteq ∂ι(M)$ and so $p\notin \operatorname{edge}(∂ι(M))$.
\end{proof}

\begin{remark}
A very similar result was shown by Galloway and Ling~\cite[Theorem 2.6]{GallowayLing}.
%\pk{Not quite, they only show that $S^\pm=\emptyset \Rightarrow S^\mp \text{is achronal}$, %(\ie they don't say that $∂ι(M)=S^{\mp}$)}
\end{remark}

Next we show that every point on an achronal boundary is the limit of a timelike curve in $ι(M)$. For a general boundary this is only true for at least one point~\cite[Lemma 2.17]{SbierskiSchwarzschild}.

\begin{lemma}\label{lem:achronal_timelike}
Given a point $p\in∂ι(M)$ and a neighborhood $p\in \tilde{O}\subset \tilde{M}$ such that $∂ι(M)\cap \tilde{O}$ is achronal in $\tilde{O}$
 then $p$ is the end point of a differentiable timelike curve in $M$, \ie there exists a differentiable timelike curve $σ: [-1,0]\to \tilde{M}$ such that $σ([-1,0))\subset ι(M)$ and $σ(0)=p$.
\end{lemma}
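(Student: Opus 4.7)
The plan is to show that, after shrinking $\tilde{O}$ to a convex normal neighborhood of $p$, at least one of the two timelike cones $I^+(p,\tilde{O})$ or $I^-(p,\tilde{O})$ lies entirely in $\iota(M)$; the desired curve $\sigma$ is then a smooth timelike geodesic segment into that cone, parametrized so that $\sigma(0)=p$. Choosing $\tilde{O}$ convex normal ensures that $I^\pm(p,\tilde{O})$ are nonempty, open, and connected, being the images under $\exp_p$ of connected cones of timelike vectors in $T_p\tilde{M}$.

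The first step is to observe that achronality of $\partial\iota(M)\cap\tilde{O}$ prevents either cone from meeting $\partial\iota(M)$: any $q\in I^-(p,\tilde{O})\cap\partial\iota(M)\cap\tilde{O}$ would give $p\in I^+(q,\tilde{O})$ and hence two timelike-related points of the achronal set. Thus $I^\pm(p,\tilde{O})\subseteq\iota(M)\cup(\tilde{M}\setminus\overline{\iota(M)})$, a disjoint union of two open sets, and connectedness forces each cone to lie entirely in $\iota(M)$ or entirely in $\tilde{M}\setminus\overline{\iota(M)}$. To exclude the case where both cones lie in the latter, I would invoke the standard fact that an achronal set is, near any of its non-edge points, the graph of a $1$-Lipschitz function over a spacelike hyperplane: in coordinates $(t,x^1,\ldots,x^n)$ on a smaller convex neighborhood of $p$ with $\partial_t$ timelike and $p$ the origin, $\partial\iota(M)\cap\tilde{O}$ becomes $\{t=f(x)\}$, and $\tilde{O}\setminus\partial\iota(M)$ splits into two open connected sides $U_\pm=\{\pm(t-f(x))>0\}$, with $I^\pm(p,\tilde{O})\subseteq U_\pm$ by the slope bound $|\nabla f|\leq 1$. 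Since $\iota(M)\cap\tilde{O}$ is open with boundary in $\tilde{O}$ equal to this graph, $\iota(M)\cap U_\pm$ is both open and closed in $U_\pm$; because $p\in\partial\iota(M)$ makes $\iota(M)\cap\tilde{O}$ nonempty, it must contain at least one of $U_+,U_-$ in full, and hence the corresponding cone.

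Once one $I^\pm(p,\tilde{O})$ is known to lie in $\iota(M)$, it suffices to pick a unit timelike $v\in T_p\tilde{M}$ pointing into that cone and set $\sigma(s):=\exp_p(-sv)$ on $[-1,0]$ (rescaling $v$ if necessary to keep $\sigma$ inside $\tilde{O}$); this is smooth and timelike, with $\sigma(0)=p$ and $\sigma([-1,0))\subset\iota(M)$, as required. The main obstacle is the graph/separation step: one must rule out $p$ being an edge point of $\partial\iota(M)\cap\tilde{O}$ in the sense of O'Neill \cite[Ch.~14]{oneill}, so that the achronal set is a genuine topological hypersurface locally separating $\tilde{O}$. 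This does not follow from achronality alone, but should be deduced from the fact that $p$ lies on the boundary of the open set $\iota(M)$, which forces $\iota(M)$ to locally fill one of the two open sides adjacent to the achronal set and hence pins the set to be a full topological hypersurface through $p$ rather than a lower-dimensional sliver terminating there.
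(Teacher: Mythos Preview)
Your overall strategy---show that one of the two open timelike cones $I^{\pm}(p,\tilde O)$ is contained in $\iota(M)$, then take a smooth timelike curve from $p$ into that cone---is exactly the paper's idea. Two points deserve comment, one a genuine gap and one an unnecessary detour.

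\medskip
\textbf{The gap.} You repeatedly invoke the exponential map and convex normal neighbourhoods (``$\sigma(s):=\exp_p(-sv)$'', ``images under $\exp_p$ of connected cones''). But $(\tilde M,\tilde g)$ is only a $C^0$-extension: the metric is merely continuous, so geodesics, $\exp_p$, and convex normal neighbourhoods are not available. This is not a minor technicality in a paper whose whole purpose is to treat $C^0$ metrics. The fix is the one the paper uses: choose coordinates $(x^0,\underline x)$ near $p$ in which $\tilde g$ is uniformly close to Minkowski, so that $\partial_{x^0}$ is timelike; then the curves $s\mapsto(\pm s,0,\dots,0)$ are smooth and timelike regardless of the regularity of $\tilde g$, and the cones $I^{\pm}(p,\tilde O)$ are nonempty, open and connected because they are sandwiched between Euclidean cones in these coordinates. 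Everything you need survives; only the language has to change.

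\medskip
\textbf{The detour.} You spend effort trying to realise $\partial\iota(M)\cap\tilde O$ as a Lipschitz graph and worry (correctly) that this requires ruling out edge points, which the hypotheses of this lemma do not give you. The paper sidesteps this entirely. In the near-Minkowski chart, achronality gives only the weak statement ``each vertical line meets $\partial\iota(M)\cap\tilde O$ at most once'', and that is already enough: pick any $q\in\iota(M)\cap\tilde O$, run the vertical line through $q$; it reaches one of the slabs $\{\pm\tfrac{19}{20}\delta\}\times(-\epsilon,\epsilon)^d\subset I^{\pm}(p,\tilde O)$ without crossing $\partial\iota(M)$. Those slabs miss $\partial\iota(M)$ by achronality (they are in $I^{\pm}(p,\tilde O)$ and $p\in\partial\iota(M)$), so the corresponding cone lies in the same connected component of $\tilde O\setminus\partial\iota(M)$ as $q$, hence in $\iota(M)$. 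No graph, no edge discussion needed.
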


\begin{proof}
The curve $σ$ might be future- or past-directed, we consider both cases here.
We choose coordinates $(x^α)\in(-δ,δ)\times(-ε,ε)^{d}$ on $\tilde{O}$ such that $p=(0,\dots,0)$, the metric is close to Minkowski and the sets $\{\pm \frac{19}{20}δ\}\times(-ε,ε)^d$ are contained in $ I^\pm(p,\tilde{O})$ (by choosing $\tilde{O}$ smaller if necessary).

For every $\underline{x}\in(-ε,ε)^d$ there is at most one $t\in(-δ,δ)$ such that $(t,\underline x)\in ∂ι(M)$: If there were more than one then they could be connected by a (vertical) timelike curve, contradicting achronality.

As $p$ is a boundary point there is a point $\tilde O\ni q\in ι(M)$. It can be connected by a vertical curve which does not cross $∂ι(M)$ to one of $\{\pm \frac{19}{20}δ\}\times(-ε,ε)^d$. Therefore the differentiable timelike curve $σ:(0,\frac{19}{20}δ)\to\tilde{O}, σ(s)=(\pm s,0,\dots,0)$ lies in $ι(M)$, as it can be connected to $q$ by a curve which does not cross the boundary.
\end{proof}

\begin{remark}
	Lemma \ref{lem:one_dir} and Lemma \ref{lem:achronal_timelike} together imply that if $\partial^+ ι(M)=\emptyset$ then $∂ι(M)=∂^- ι(M)$, with a similar statement obtained by reversing time-orientation.
\end{remark}

In order to prove Theorems \ref{T16VI17.1} and \ref{T27VI17.1} we will need the following two Lemmas.

\begin{lemma}\label{lem:horizon_achronal}
	We consider a globally hyperbolic spacetime $(M, g)$ and a $C^0$-extension $(\tilde{M}, \tilde{g})$ with embedding $ι:M\to \tilde{M}$. If there exists a spacelike Cauchy hypersurface $\hyp\subset M$ such that for all timelike curves $\gamma\subset M$ the set
	\[
	W:=I^+(I^-(I^+(\gamma,I^-(\hyp, M)),M),M)\cap \hyp
	\]
	is precompact in $M$, then for any $p∈∂^-ι(M)$ there exists a neighborhood $\tilde{O}$ of $p$ such that the futuremost part of $∂^-ι(M)$ in $\tilde{O}$, $\{q∈∂^-ι(M)\cap \tilde{O} \,|\, I^+(q,\tilde{O})\cap ∂^-ι(M)=\emptyset\}$, is a non-empty achronal topological hypersurface.
 \end{lemma}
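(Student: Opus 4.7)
The plan is to establish in turn three properties of the futuremost part
\[F:=\{q\in\partial^-\iota(M)\cap\tilde O:I^+(q,\tilde O)\cap\partial^-\iota(M)=\emptyset\}\,:\]
achronality, non-emptiness, and the topological-hypersurface structure. Achronality is immediate from the definition: two points $q_1,q_2\in F$ with $q_2\in I^+(q_1,\tilde O)$ would witness $I^+(q_1,\tilde O)\cap\partial^-\iota(M)\neq\emptyset$, contradicting $q_1\in F$, and this step uses no hypothesis on $M$.

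For non-emptiness I would introduce Minkowski-like coordinates on a compact box $L$ around $p$ with $p$ at the origin, vertical lines timelike, and $\tilde O:=\operatorname{int}(L)$ the neighborhood in the statement; choosing $L$ generically and narrow enough in the spatial directions, the compact set $K:=\overline{\partial^-\iota(M)}\cap L$ contains $p$ and the $t$-coordinate attains its maximum on $K$ at an interior point $q^*\in\tilde O\cap\partial\iota(M)$. The key step is to promote $q^*\in\overline{\partial^-\iota(M)}$ to $q^*\in\partial^-\iota(M)$. I would pick $q_n\in\partial^-\iota(M)\cap L$ with $q_n\to q^*$ and past-directed timelike curves $\tilde\sigma_n$ ending at $q_n$ with $\tilde\sigma_n((0,1])\subset\iota(M)$; the pullbacks $\sigma_n$ are past-inextendible timelike curves in the globally hyperbolic $(M,g)$ which cross $\hyp$ at unique points $r_n$. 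Fixing a base timelike curve $\gamma\subset M$ whose extension to the future in $\tilde M$ reaches $p$, the precompactness hypothesis forces $r_n\in\overline W$ for $n$ large, so a subsequence satisfies $r_n\to r_\infty\in M$. A standard limit-curve argument in $\tilde M$ then yields a past-directed causal curve from $q^*$ meeting $\iota(M)$ near $\iota(r_\infty)$, and pushing off the null direction produces a past-directed timelike curve from $q^*$ into $\iota(M)$, so $q^*\in\partial^-\iota(M)$. Any hypothetical $q'\in I^+(q^*,\tilde O)\cap\partial^-\iota(M)$ would lie in $K$ with $t(q')>t(q^*)$, contradicting maximality; hence $q^*\in F$.

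For the topological-hypersurface property, achronality of $F$ already realises it locally as the graph of a Lipschitz function $t=\tau(\underline x)$; by the standard fact that an edgeless achronal set is a topological hypersurface (\cite[Section~14, Lemma~25]{oneill}, applicable without metric differentiability, cf.\ Lemma~\ref{lem:one_dir}), it suffices to show that $F$ has empty edge. For $q\in F$ this follows, as in Lemma~\ref{lem:one_dir}, by combining the past-directed timelike curve from $q$ into $\iota(M)$ (furnished by $q\in\partial^-\iota(M)$) with $I^+(q,\tilde O)\cap\partial^-\iota(M)=\emptyset$, so that a connectedness argument forces every vertical timelike line in $\tilde O$ sufficiently close to the one through $q$ to cross $F$. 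The main obstacle is the non-emptiness step, specifically the promotion of $q^*$ from $\overline{\partial^-\iota(M)}$ to $\partial^-\iota(M)$: this is the only place the precompactness of $W$ is used, and it is needed precisely to make a limit-curve argument go through in the merely continuous extension $\tilde M$.
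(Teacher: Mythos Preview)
Your approach differs substantially from the paper's, and the difference creates real gaps. The paper does not use a limit-curve argument at all; it uses the precompactness of $W$ at the very outset by choosing the neighbourhood $\tilde O$ so that $\tilde O\cap\iota(\bar W)=\emptyset$. This single choice yields the crucial fact that for any $r\in I_\gamma\cap\tilde O$ (with $I_\gamma:=\iota(I^-(I^+(\gamma,M)\cap I^-(\hyp,M),M))$) one has $I^+(r,\tilde O)\subset\iota(M)$: a future-directed timelike curve in $\tilde O$ from such an $r$ which left $\iota(M)$ would first have to cross $\iota(\hyp)$ at a point of $\iota(W)$, and $\tilde O$ was chosen to miss that set. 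In particular $I^+(p,\tilde O)\subset\iota(M)$, so the top slice of the coordinate box lies in $\iota(M)$, every vertical line descending from the top meets $\partial^-\iota(M)$ at a first point $\sigma_{\underline x}(b(\underline x))$, and the futuremost part is exactly this graph; achronality and the hypersurface property then follow by replaying the argument with $p$ replaced by each $\sigma_{\underline x}(b(\underline x))$.

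Your non-emptiness step has a genuine gap: the assertion ``the precompactness hypothesis forces $r_n\in\bar W$ for $n$ large'' is unjustified. The set $W$ is built from the fixed curve $\gamma$ approaching $p$, while your points $q_n$ converge to $q^*$, and $q^*$ need bear no causal relation to $p$ in $\tilde O$ (indeed $t(q^*)\geq t(p)$, so $q^*$ may well lie outside both $I^+(p,\tilde O)$ and $I^-(p,\tilde O)$); without such a relation there is no reason the curves $\sigma_n$ should cross $\hyp$ inside $W$, and the limit-curve step collapses. The hypersurface step has a parallel problem: your ``connectedness argument'' would require $I^+(q,\tilde O)\cap\partial\iota(M)=\emptyset$, but membership in $F$ only gives $I^+(q,\tilde O)\cap\partial^-\iota(M)=\emptyset$, and in this lemma $\partial^+\iota(M)$ is not assumed empty, so a vertical line above $q$ could exit $\iota(M)$ again without contradicting $q\in F$. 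Both gaps are closed in the paper precisely by the choice $\tilde O\cap\iota(\bar W)=\emptyset$. (There is also systematic past/future confusion: points of $\partial^-\iota(M)$ are reached from $\iota(M)$ by past-directed curves, so it is \emph{future}-directed curves from $q\in\partial^-\iota(M)$ that enter $\iota(M)$, and $\gamma$ extends to $p$ toward the past, not the future.)
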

Recall that the set $W$ is shown in Figure~\ref{fig:W}.
\begin{remark}
	A simple condition that gives a precompact $W$ is an upper bound for the diameter of $I^+(\gamma, M)\cap \hyp$ which is uniform with respect to $\gamma$.
\end{remark}
\begin{proof}
	Let $p∈∂^-ι(M)$ be a point on the past boundary, $γ$ the timelike curve approaching it, and $\hyp$ the spacelike Cauchy hypersurface as in the Lemma. As $\bar{W}$ (closure in $M$) is compact we can choose a neighborhood $\tilde{O}$ of $p$ such that $\tilde{O}\cap ι(\bar{W})=\emptyset$. Choosing a smaller neighborhood if necessary we introduce coordinates  $(x^α)\in(-δ,δ)\times(-ε,ε)^{d}$ such that the metric is close to Minkowski, $\dot{γ}=∂_{x^0}$ and $\{\pm \frac{19}{20}δ\}\times(-ε,ε)^d\subseteq I^\pm(p,\tilde{O})$.

We consider the straight (in coordinates) timelike curves $σ_{\underline{x}}: (-δ,\frac{19}{20}δ]\to \tilde{O},\; s\mapsto (s, \underline{x})$ where $\underline{x}\in (-ε,ε)^d$. These end in $ι(M)$ by the construction of the coordinates. We define
$$
 b(\underline{x})=\inf \{s\in(-\delta,\frac{19}{20}δ]\;|\; ∀ s'>s, σ_{\underline{x}}(s')\subset ι(M)\}
$$
and show that $b(\underline{x})>-δ$, \ie that the $σ_{\underline{x}}$ intersect $∂^-ι(M)$ at least once, at $σ_{\underline{x}}(b(\underline{x}))$, and that
$$
 σ_{\underline{x}}(b(\underline{x}))∈\tilde O \setminus (I^+(p, \tilde{O})\cup I^-(p,\tilde{O})
 )
  \,.
$$
%.
	
	As $\hyp$ is a Cauchy hypersurface and by the definition of $W$, the future in $\tilde{O}$ of any point in $I_γ\cap \tilde{O}$, where
$$
 I_γ:=ι(I^-(I^+(γ,M)\cap I^-(\hyp,M),M))\,,
$$
has to be contained in $ι(M)$.
% \ptcr{I still do not understand the relevance of the (correct) sentence in the parenthesis\\
% 	--\\
% 	pk: The future of a point $x\in I_γ\cap\tilde{O}$ has to intersect $\hyp$ in $W$, but $\tilde{O}\cap W=\emptyset$ so $I^+(x,\tilde{O})\subset ι(M)$. If you think the sentence in parenthesis is not needed we can remove it.}
% \pk{We want $I^+(p,\tilde{O})\subset ι(M)$. Now $\gamma\subset I_γ$ which shows this. (Here we don't need the full definition of $I_γ$ and $W$ yet, that is used in the next paragraph.)\\ -- \\ ptc: still don't get it}
In particular, as the future of a point is the future of any past directed curve ending at that point and $γ\cap\tilde{O}\subset I_γ$, $I^+(p,\tilde{O})=I^+(γ,\tilde{O})\subset ι(M)$ and therefore
$$
 σ_{\underline{x}}(b(\underline{x}))\notin I^+(p,\tilde{O})
  \,.
$$
%
%\ptcr{I agree with this conclusion, without the need of the parenthetic %sentence questioned above, and the need of the unclear red addition}

We now have to show that $b(\underline{x})>-\delta$ and $σ_{\underline{x}}(b(\underline{x}))\notin I^-(p,\tilde{O})$. If we assume this is false then there exists $s^-\in(b(\underline{x}),\frac{19}{20}δ)$ such that $σ_{\underline{x}}(s^-)\in I^-(p,\tilde{O})$. But we also have $σ_{\underline{x}}(s^-)\subset I_γ$: The end point $σ_{\underline{x}}(\frac{19}{20}δ)$ of $σ_{\underline{x}}$ is contained in $\{\frac{19}{20}δ\}\times(-ε,ε)^d\subseteq I^+(p,\tilde{O})=I^+(γ,\tilde{O})$, and therefore $σ_{\underline{x}}((b(\underline{x}),\frac{19}{20}δ])\subset I_γ$. Now, as $σ_{\underline{x}}(s^-)\subset I_γ$ and $p\in I^+(σ_{\underline{x}}(s^-),\tilde{O})$, we have $p\in ι(M)$ by the argument in the previous paragraph, which is a contradiction.
%\ptcheck{ok here now}

	Repeating the argument with $p$ replaced by $σ_{\underline{x}}(b(\underline{x}))$ for all $\underline{x}∈(-ε,ε)^d$ shows that the set $\{σ_{\underline{x}}(b(\underline{x})) \;|\; \underline{x}∈(-ε,ε)^d\}=\{q∈∂^-ι(M) \;|\; I^+(q,\tilde{O})\cap ∂^-ι(M)=\emptyset\}$ is achronal in $\tilde{O}$. It is a topological hypersurface as $p$ can't be an edge point by the properties of $b(\underline{x})$.
\end{proof}

Under the stronger assumption that the Cauchy hypersurface $\hyp$ is compact, we obtain the following simpler result.

\begin{lemma}\label{lem:compact_past_achronal}
	We consider a globally hyperbolic spacetime $(M, g)$ and a $C^0$-extension $(\tilde{M}, \tilde{g})$ with embedding $ι:M\to \tilde{M}$. If there exists a compact spacelike Cauchy hypersurface of $M$ then $∂^-ι(M)$ is a locally achronal topological hypersurface.
\end{lemma}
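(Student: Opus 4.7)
The plan is to deduce this from Lemma~\ref{lem:horizon_achronal} by using the compactness of $\hyp$ to choose a single neighborhood in which its argument applies uniformly.

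First I would invoke Lemma~\ref{lem:horizon_achronal}: since $\hyp$ is compact and the set $W$ appearing in that lemma is contained in $\hyp$, the precompactness hypothesis is automatic, so for every $p\in\partial^-\iota(M)$ there is a neighborhood $\tilde{O}$ of $p$ in which the futuremost part of $\partial^-\iota(M)$ is a non-empty achronal topological hypersurface. The remaining task is to promote this to the achronality of all of $\partial^-\iota(M)\cap\tilde{O}$.

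The main step is the uniform choice of $\tilde{O}$. The image $\iota(\hyp)$ is compact, hence closed in $\tilde{M}$, and $p\notin\iota(\hyp)$, so $p$ has a neighborhood disjoint from $\iota(\hyp)$. Moreover, the past-directed timelike curve $\gamma$ approaching $p$ must be past-inextendible in $M$ (otherwise $p\in\iota(M)$); by global hyperbolicity of $M$ it therefore crosses $\hyp$ exactly once, and the portion beyond the crossing lifts to show that $p\in I^-(\iota(\hyp),\tilde{M})$. One can thus take $\tilde{O}\subset I^-(\iota(\hyp),\tilde{M})$ while still keeping $\tilde{O}\cap\iota(\hyp)=\emptyset$. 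For \emph{every} $q\in\partial^-\iota(M)\cap\tilde{O}$ and any past-directed timelike curve $\gamma_q$ approaching $q$, the associated set $W_q$ lies in $\hyp$, so $\tilde{O}\cap\iota(\bar{W}_q)=\emptyset$; and the tail of $\gamma_q$, being in $\tilde{O}$, lies in $I^-(\hyp,M)$ and hence in $I_{\gamma_q}$. These are precisely the two ingredients that (together with the Cauchy property of $\hyp$) drive the proof of Lemma~\ref{lem:horizon_achronal}, so rerunning that argument with $q$ in place of $p$ yields $I^+(q,\tilde{O})\subseteq\iota(M)$ for every such $q$.

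Achronality of $\partial^-\iota(M)\cap\tilde{O}$ is then immediate: if $q_1,q_2\in\partial^-\iota(M)\cap\tilde{O}$ satisfied $q_2\in I^+(q_1,\tilde{O})$, we would have $q_2\in\iota(M)$, contradicting $q_2\in\partial\iota(M)$. The topological hypersurface property follows by the edge-point argument used at the end of the proof of Lemma~\ref{lem:one_dir}.

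The main obstacle is the uniformity in $q$: one must ensure that a single neighborhood $\tilde{O}$ suffices for every boundary point it contains, and not only for $p$. Compactness of $\iota(\hyp)$ is exactly the feature of the compact-Cauchy setting that provides this uniformity, since all the sets $\iota(\bar{W}_q)$ sit inside the same compact set $\iota(\hyp)$ and the hypothesis driving the argument of Lemma~\ref{lem:horizon_achronal} becomes independent of $q$.
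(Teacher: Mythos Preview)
Your proposal is correct and takes essentially the same route as the paper: invoke Lemma~\ref{lem:horizon_achronal} via compactness of $\hyp$, then shrink $\tilde O$ so that $\tilde O\cap\iota(\hyp)=\emptyset$, which lets the argument of that lemma apply uniformly to every $q\in\partial^-\iota(M)\cap\tilde O$ and yields $I^+(q,\tilde O)\subset\iota(M)$, hence achronality. The extra requirement $\tilde O\subset I^-(\iota(\hyp),\tilde M)$ you impose is absent from the paper and is not actually needed (disjointness from $\iota(\hyp)$ already forces the tail of each $\gamma_q$ in $\tilde O$ to lie in $I^-(\hyp,M)$, since the past-inextendible curve eventually enters $I^-(\hyp,M)$ and cannot recross $\hyp$ inside $\tilde O$), but it does no harm.
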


\begin{proof}
	Let $\hyp$ be the compact spacelike Cauchy hypersurface. As any subset of a compact set is precompact by definition, Lemma \ref{lem:horizon_achronal} applies. What remains to be shown is that the set $\{q∈∂^-ι(M)\cap \tilde{O} \,|\, I^+(q,\tilde{O})\cap ∂^-ι(M)=\emptyset\}$ is actually the full $∂^-ι(M)\cap \tilde{O}$, \ie that there are no additional points in $∂^-ι(M)$ below it.
	
	We choose a neighborhood $\tilde{O}$ and coordinates as in the proof of Lemma \ref{lem:horizon_achronal}, but with $\tilde{O}\cap \hyp = \emptyset$. Now the future in $\tilde{O}$ of any point $q\in ∂^-ι(M)\cap \tilde{O}$ lies in $ι(M)$.
	
	If there was a point $q\in ∂^-ι(M)\cap \tilde{O}$ such that $∂^-ι(M) \ni p\in I^+(q,\tilde{O})\cap ∂^-ι(M)$ then $p$ would lie in $ι(M)$.
\end{proof}

\section{Proofs of the main theorems}\label{Sproofs}

\medskip

The proofs depend on the following Lemma, which is a slight variation of a result of Sbierski~\cite{SbierskiAlternate}. We use his notation for the sets
$$
C_a^-:=\{0\ne X\in \R^{d+1}| \frac{<X,e_0>_{\R^{d+1}}}{|X|_{\R^{d+1}}}<-a\}
\,,
$$
where $0<a<1$, $<.,.>_{\R^{d+1}}$ is the Euclidean scalar product in $\R^{d+1}$ and $|.|_{\R^{d+1}}$ is the Euclidean norm. The $C_a^-$ are cones of vectors with angle less than $\cos^{-1}(a)$ to the $x^0$ axis with the tip of the cone pointing up.

\begin{lemma}\label{lem:distbound}
	We consider a spacetime $(M,g)$ with extension $(\tilde M, \tilde g)$. Given a neighborhood $\tilde{O}$ of a point on the boundary of $ι(M)$, a point $p\in \tilde{O}\cap ι(M)$ and a chart $\tilde\psi: \tilde O \to (-δ,δ)\times (-ε,ε)^d$ such that

	\begin{enumerate}
		\item $∂_{x_0}$ is timelike,
		\item $|\tilde{g}_{αβ}- η_{αβ}|<ν$ where $η$ is the Minkowski metric and $1/2>ν>0$ a constant such that $∀ a\in (-δ,δ)\times (-ε,ε)^d,\; \tilde{ψ}^{-1}(a+C^-_{5/6})\subseteq I^-(\tilde{ψ}^{-1}(a),\tilde{O})\subseteq \tilde{ψ}^{-1}(a+C^-_{5/8})$,
		\item $\tilde{ψ}^{-1}(\{\tilde{x}^0<-\frac{1}{10}δ\})\subseteq \tilde{M}\setminus I^-(\tilde{ψ}^{-1}(\{\frac{19}{20}δ\}\times (-ε,ε)^d),ι(M))$,
		\item $I^-(p,ι(M))\subset \tilde{O}$,
		\item $\tilde{\psi}^{-1}(\{\frac{19}{20}δ\}\times (-ε,ε)^d)\subseteq ι(I^+(p, \tilde{O}))$,
	\end{enumerate}
there exist $q ∈ I^-(p,ι(M))\cap\tilde{O}$ and a constant $0<C_d<\infty$ such that for all Cauchy hypersurfaces $\hyp$ of $M$ the distance in $I^-(p,ι(M))\cap ι(\hyp)$ of any two points in $I^-(q,ι(M))\cap ι(\hyp)$ is bounded above by $C_d$.
\end{lemma}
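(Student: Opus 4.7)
The plan is to transplant the statement into a Euclidean-geometric bound in the chart $\tilde\psi$, uniform in the Cauchy hypersurface $\hyp$. First I would confine $I^-(p,\iota(M))$ to a slab: by assumption~5, $I^-(p,\iota(M))$ is contained in the past of the top slice $\{\tfrac{19}{20}\delta\}\times(-\epsilon,\epsilon)^d$ of the chart, and by assumption~3 points of that past satisfy $\tilde x^0\geq -\tfrac{1}{10}\delta$, so $I^-(p,\iota(M))\subseteq\tilde\psi^{-1}\bigl(\{\tilde x^0\geq -\tfrac{1}{10}\delta\}\times(-\epsilon,\epsilon)^d\bigr)$. I would then choose $q:=\tilde\psi^{-1}(\tilde\psi(p)-\alpha e_0)$ directly below $p$ in chart time, picking $\alpha>0$ so that the outer cone $\tilde\psi(q)+C^-_{5/8}\supseteq I^-(q,\iota(M))$, restricted to the slab, lies strictly inside the inner cone $\tilde\psi(p)+C^-_{5/6}\subseteq I^-(p,\iota(M))$ with a Euclidean margin $\eta>0$; a direct cone computation using the aperture gap in assumption~2 shows $\alpha$ and $\eta$ can be arranged compatibly, and bounds the truncated outer $q$-cone in a Euclidean ball of radius $R_q$ about $\underline x_q$.

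Next I would use the near-Minkowski assumption to represent $\iota(\hyp)\cap\tilde O\cap\iota(M)$ as a Lipschitz graph $\tilde x^0=f(\underline x)$ over the spatial coordinates, with Lipschitz constant $L$ controlled by $\nu$; Riemannian lengths of paths on $\iota(\hyp)$ are then comparable to Euclidean lengths of their spatial projections. For $r_1,r_2\in I^-(q,\iota(M))\cap\iota(\hyp)$, I connect them by the path on $\iota(\hyp)$ obtained by lifting the Euclidean straight segment from $\underline x_{r_1}$ to $\underline x_{r_2}$ via $f$. The spatial projection stays in the ball of radius $R_q$, and the Lipschitz bound limits the $\tilde x^0$-variation along the lift by $L\cdot 2R_q$. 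Calibrating $\alpha$ so that $\eta\geq(1+L)\cdot 2R_q$ keeps the lift inside the inner $p$-cone, hence in $I^-(p,\iota(M))\cap\iota(\hyp)$; its length is bounded by a multiple of $R_q$, yielding the uniform constant $C_d$.

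The principal obstacle is the joint calibration of $\alpha$ and $\eta$: the cone-margin must exceed the Lipschitz excursion $(1+L)\cdot 2R_q$ for the lifted path, while at the same time the outer $q$-cone must fit inside the inner $p$-cone with this margin throughout the slab. Both constraints can be satisfied precisely because of the numerical aperture gap between $C^-_{5/6}$ and $C^-_{5/8}$ in assumption~2, together with the slab-truncation from assumption~3, which bounds the downward extent of the outer $q$-cone. Without either ingredient, $R_q$ (or the required $\alpha$) would blow up.
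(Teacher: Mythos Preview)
Your outline is essentially the paper's proof: confine $I^-(p,\iota(M))$ to the slab $\{\tilde x^0\ge -\tfrac{1}{10}\delta\}$ via assumptions (iii) and (v), choose $q$ so that the truncated outer $5/8$-cone at $q$ fits inside the inner $5/6$-cone at $p$, write $\iota(\hyp)$ locally as a graph $\tilde x^0=f(\underline x)$ with slope bounded in terms of $\nu$ alone (the paper bounds $|\partial_i f|$ from the spacelike condition via the implicit function theorem; your Lipschitz formulation is equivalent), lift the Euclidean straight segment via $f$, and bound its Riemannian length by a constant depending only on $\epsilon$, $d$, and $\nu$.

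The one substantive deviation is your explicit margin $\eta$ between the two cones. The paper does not introduce a margin and justifies in one sentence that the lifted segment lies in $I^-(p,M)\cap\hyp$; your version is more careful here. Be aware, though, that the calibration $\eta\ge(1+L)\cdot 2R_q$ is not free: it forces a lower bound on $\alpha$ proportional to $x^0_p+\tfrac{1}{10}\delta$, and you have not argued that the resulting $q$ still lies in $\iota(M)$ (nothing in the hypotheses prevents $\partial\iota(M)$ from sitting well above $-\tfrac{1}{10}\delta$ along the vertical through $p$). The paper has the analogous existence assertion but demands only bare cone containment, a weaker constraint.

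One step you skip that the paper spells out: the graph $f$ is well-defined over all of $(-\epsilon,\epsilon)^d$ because assumption (iii) makes the vertical coordinate lines past-inextendible in $M$, and assumption (v) (together with the reduction to $q\in\iota(I^+(\hyp,M))$, the complementary case being vacuous) places their top endpoints in $I^+(\hyp,M)$; the Cauchy property of $\hyp$ then yields a unique intersection. Your sketch needs this to know the lifted segment exists on all of $[0,1]$.
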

\begin{proof}
	We choose $q\in I^-(p,\tilde{O})\cap ι(M)$ such that
	\[
	(\tilde{ψ}(q)+C^-_{5/8})\cap\{\tilde{x}^0>-δ/10 \} \subset (\tilde{ψ}(p)+C^-_{5/6})\cap\{\tilde{x}^0>-δ/10\}\,,
	\]
	\ie such that the past of $q$ in $M$ lies completely inside a (Euclidean) cone contained in the past of $p$.
	
	We assume  $q\in ι(I^+(\hyp, M))$, as otherwise $I^-(ι^{-1}(q),M)\cap \hyp = \emptyset$ and there is nothing to show. By property (iii) of the chart $\tilde{ψ}$ there exists a function $(-ε,ε)^d\to (-δ,δ)\,,\; \underline{x}\mapsto L_{\underline{x}}$ such that the timelike curves $σ_{\underline{x}}: (L_{\underline{x}}, δ)\to M, σ_{\underline{x}}(s)=ι^{-1}(\tilde{ψ}^{-1}(s,\underline{x}))$ are past inextendible in $M$. As $q$ lies in the past of $p$, property (v) holds for $q$ as well. This implies that the curves $σ_{\underline{x}}$ intersect $I^+(\hyp,M)$ and therefore intersect $\hyp$ exactly once, say at $s=f(\underline{x})$. We take this as the definition of $f: (-ε,ε)^d \to (-δ,δ)$.
	
	We first show that $f$ is smooth:
	As $\tilde{ψ}(ι(\hyp)\cap \tilde{O})$ is a smooth submanifold, there exists for every point $(f(\underline{x}_0),\underline{x}_0)$ a neighborhood $W$ and a smooth submersion $g: W\to \R$ such that $\tilde{ψ}(ι(\hyp)\cap \tilde{O})\cap W=\{g=0\}$. As $\hyp$ is a Cauchy hypersurface no timelike vector can be tangent to it. Therefore $∂_0 g|_{(f(\underline{x}_0), \underline{x}_0)}\neq 0$ and by the implicit function theorem there exists a smooth function $h: (-ε, ε)^d \supseteq V\to (-δ,δ)$, where $V$ is a neighborhood of $\underline{x}_0$, such that $g(h(\underline{x}), \underline{x})=0$. Thus $f|_V=h$ and therefore $f$ is smooth.
	
	The next step is to show that $|∂_i f|$ is bounded by a positive constant in $\tilde{O}$ for all $i$. As vectors tangent to $\hyp$ cannot be timelike we obtain the inequality
	\begin{equation}\label{2017VI6.1}
	0\leq \tilde{g}((∂_i f)∂_0+∂_i, (∂_i f) ∂_0+∂_i)=(∂_i f)^2 \tilde{g}_{00}+2(∂_i f)\tilde{g}_{0i}+\tilde{g}_{ii}\,.
	\end{equation}
	By property (ii), $\tilde{g}_{00}<-1/2$, $\tilde{g}_{ii}>1/2$, and $\tilde{g}_{0i}<1/2$ and therefore this inequality is only satisfied for $(∂_i f)_-\leq (∂_i f)\leq (∂_if)_+$ where $(∂_i f)_\pm$ are the values where equality holds in \eqref{2017VI6.1}, \ie
	\[
	(∂_i f)_\pm = \frac{-\tilde{g}_{0i}\mp\sqrt{(\tilde{g}_{0i})^2-\tilde{g}_{ii}\tilde{g}_{00}}}{\tilde{g}_{00}}\,.
	\]
	Again using property (ii), we see that $(∂_i f)_\pm$, and therefore also $(∂_i f)$, are bounded by a constant independent of $f$.
	
	We define $ω: (-ε, ε)^d \to (-δ,δ) \times (-ε, ε)^d,\; ω(\underline{x})=(f(\underline{x}), \underline{x})$. This parameterizes a smooth submanifold $\tilde{S}$ which is isometric to an open subset of $\hyp$ in $M$ by $ι^{-1}\circ \tilde{ψ}^{-1}$. We denote by $\bar{g}$ the metric induced on $\tilde{S}$ by $\tilde{g}$. The components of $\bar{g}$ are
	\[
	\bar{g}_{ij}=\tilde{g}_{μν}\frac{∂ω^μ}{∂x_i}\frac{∂ω^ν}{∂x_j}=\tilde{g}_{00}\frac{∂f}{∂x_i}\frac{∂f}{∂x_j}+\tilde{g}_{0j}\frac{∂f}{∂x_i}+\tilde{g}_{i0}\frac{∂f}{∂x_j}+\tilde{g}_{ij}\,.
	\]
	As $|\tilde{g}_{μν}|$ and $∂_i f$ are bounded by the above, we have $|\bar{g}_{ij}|<C_{\bar{g}}$ for a positive constant $C_{\bar{g}}$.
	
	We now consider two points $r,s\in I^-(q, ι(M))\cap ι(M)$ as in the lemma. As $I^-(q,M)\subset \tilde{O}$ there exist $\underline{x}, \underline{y}∈ (-ε, ε)^d$ such that $ω(\underline{x})=\tilde{ψ}(ι(r))$ and $ω(\underline{y})=\tilde{ψ}(ι(s))$. The length of the straight line $σ: [0,1] \to (-ε,ε)^d$, $σ(\ell)=\underline{x}+ \ell(\underline{y} - \underline{x})$ is given by
	\[\begin{split}
	L(σ)=&\int_0^1 \sqrt{\bar{g}(\dot{σ}(\ell), \dot{σ}(\ell))}\;\D\ell\\
	=&\int_0^1\sqrt{\sum_{i,j=1}^d(\underline{y}_i-\underline{x}_i)\bar{g}_{ij}(σ(\ell))(\underline{y}_j - \underline{x}_j)}\;\D\ell\\
	\leq& \int_0^1\sqrt{\sum_{i,j=1}^d 2ε \cdot C_{\bar{g}}\cdot 2ε}\;\D\ell\\
	=&2ε d \sqrt{C_{\bar{g}}}\,.
	\end{split}\]
	
	As $I^-(q,ι(M))\cap \tilde{O}\subset \tilde{ψ}^{-1}(\tilde{ψ}(p)+C^-_{5/6})$ the curve $σ$ is contained in $I^-(p,M)\cap\hyp$. As the distance in $I^-(p,M)\cap \hyp$ between $r$ and $s$ is defined as the infimum over the length of all piecewise smooth curves connecting them, and $ι^{-1}\circ\tilde{ψ}^{-1}\circ σ$ is one such curve, this completes the proof.
\end{proof}

We are now ready to prove Theorem \ref{T12VI17.1}.

\begin{proof}[Proof of Theorem \ref{T12VI17.1}]
By Lemma \ref{lem:one_dir} the boundary $∂ι(M)$ of the extension is achronal, hence differentiable almost everywhere, and by Lemma \ref{lem:achronal_timelike} we have for every $p_∂∈∂ι(M)$ a timelike curve $γ: [0,1]\to \tilde{M}$ such that $γ([0,1))\subset ι(M)$ and $γ(1)=p_∂$.

We assume that $∂ι(M)$ is spacelike at a point $p_∂$ at which it is differentiable and establish a contradiction. We choose a neighborhood $\tilde{O}$ of $p_∂$ such that $t<t_L$
 in $\tilde{O}\cap ι(M)$, where $t_L$ is the constant appearing in the Definition~\ref{def:expanding} of expanding singularity. This is possible as $∂ι(M)$ is achronal: In any neighborhood of $p_∂$, either $t<t_L$ or $\{t=t_L\}$ is some achronal set lying above $∂ι(M)$ and we can find a smaller neighborhood of $p_∂$ which doesn't intersect it.

We choose coordinates $\tilde{x}^i$ on $\tilde{O}$ such that $\tilde{x}^i(p_∂)=0$, $\tilde{g}(p_∂)=η$ (where $η$ is the Minkowski metric) and $|\tilde{g}(x)_{αβ}-η_{αβ}|<ν$ for all $x\in\tilde{O}$ and $ν$ such that condition 2 of Lemma \ref{lem:distbound} is satisfied.

We perform a Lorentz boost to transform the normal vector of $∂ι(M)$ at $p_∂$ to $∂_{x^0}$ and, by choosing a smaller neighborhood $\tilde{O}$ if necessary, ensure that $∂ι(M)$ is almost horizontal, \ie $∂ι(M)\cap \tilde{O}\subset \{-δ/10<\tilde{x}^0<+δ/10\}$ and therefore condition 3 is satisfied.

We choose a point $p \in \tilde{O}\cap ι(M)$ such that $I^-(p,ι(M))\subset\tilde{O}$, satisfying condition 4. Finally, by choosing a smaller neighborhood, we can satisfy the remaining condition 5.

Applying Lemma \ref{lem:distbound} we obtain, for a point $q\in I^-(p,ι(M))$ and for all Cauchy hypersurfaces $\hyp$, an upper bound for the distance in $I^-(p,ι(M))\cap\hyp$ of any two points $r,s \in I^-(q,M)\cap \hyp$.

This is a contradiction to Definition \ref{def:expanding} of expanding singularities.
\end{proof}
\begin{proof}[Proof of Proposition \ref{p24VIII17.1}]
To prove that there exists no spacelike hypersurface $Σ_{p_∂}\subset \tilde{M}\setminus ι(M)$ such that  $p_∂\in Σ_{p_∂}$ we use the same argument, but choose coordinates in $\tilde{O}$ such that $Σ_{p_∂}$ instead of $∂ι(M)$ is almost horizontal. As $Σ_{p_∂}$ is spacelike, the part of $I^-(p,\tilde{M})$ lying above $Σ_{p_∂}$ is entirely contained in $\tilde{O}$, we denote it by $I_{Σ_{p_∂},p}$. As any past directed curve from $p$ has to cross $Σ_{p_∂}$ before leaving $\tilde{O}$ we have $I^-(p,ι(M))\subseteq I_{Σ_{p_∂},p}$ and therefore we obtain a contradiction as before.
\end{proof}
%\ptcheck{20 X up to here}

The proofs of Theorems \ref{T16VI17.1} and \ref{T27VI17.1} proceed  in a very similar way:

\begin{proof}[Proof of Theorems \ref{T16VI17.1} and \ref{T27VI17.1}]
By Lemma \ref{lem:compact_past_achronal} the conditions of Theorem \ref{T16VI17.1} imply that $∂^-ι(M)$ is a locally achronal topological hypersurface. Similarly, under the conditions of Theorem \ref{T27VI17.1}, Lemma \ref{lem:horizon_achronal} implies that the ``locally futuremost part'' of the past boundary $∂^-ι(M)$ is  achronal, \ie for every $p\in ∂^-ι(M)$ there exists a neighborhood $U$ of $p$ such that
$$
 \{z∈ ∂^-ι(M)\cap U\; |\; I^+(z,U)\cap ∂^-ι(M) = \emptyset\}
$$
is an achronal topological hypersurface in $U$.

By the definition of $∂^-ι(M)$ we have for every $p∈ ∂^-ι(M)$ a timelike curve $γ$ ending at $p$. The proof now proceeds analogously to that of Theorem \ref{T12VI17.1} with $∂ι(M)$ replaced by $∂^-ι(M)$ in the case of Theorem \ref{T16VI17.1} and with the ``locally futuremost part'' of $∂^-ι(M)$ in the case of Theorem \ref{T27VI17.1}.

The argument excluding the existence of a spacelike hypersurface $Σ_{p_∂}\subset \tilde{M}\setminus ι(M)$ such that  $p_∂\in Σ_{p_∂}$ follows the proof of Proposition \ref{p24VIII17.1}, except that to guarantee that $I^-(p,ι(M))$ lies above $Σ_{p_∂}$ we only need $Σ_{p_∂}\cap ι(\{t<t_L\})=\emptyset$ as $t<t_L$ in $I^+(∂^-ι(M),\tilde{O})$.
\end{proof}
%\ptcheck{20 X up to here}

We pass now to the

\begin{proof}[Proof of Proposition~\ref{P30X17}]
	In contrast to the proofs above,  if the boundary is null we cannot use the pasts of points in $M$ to identify subsets of Cauchy hypersurfaces in neighborhoods in $\tilde{M}$ with those in $M$.
Instead we consider the total diameter of the (compact) Cauchy hypersurface. We first need to ensure that neighborhoods of the boundary contain the whole (embedded) Cauchy hypersurface $\hyp_s:=\{t=s\}$, for $s$ sufficiently small.
	
	By the proof of Theorem 3.1 in \cite{chrusciel_1993} there exists a constant $δ>0$ such that for all $0<s\leq δ$ there is a map $ψ_s: \tilde{M}\to \tilde{M}$, the flow of a continuous timelike vector field on $\tilde{M}$, such that $ψ_s(∂ι(M))=ι(\hyp_s)$ and these are compact. In \cite{chrusciel_1993} it is assumed that the extended manifold is at least $C^3$, but this can be relaxed in our case: The results of \cite[Lemma 3.2]{chrusciel_1993} hold by our  Lemma~\ref{lem:one_dir}. Note that \cite[Lemma 2.1]{chrusciel_1993} and \cite[Lemma 3.3]{chrusciel_1993} require only differentiabilty of the manifold, not the metric. The rest of the proof takes place in $(M,g)$ which is smooth in any case. In addition, \cite[Theorem 3.1]{chrusciel_1993} assumes that the boundary itself is $C^1$ but this can similarly be relaxed as long as the Cauchy hypersurfaces $\hyp_t$ are $C^1$.
	
	By Lemma~\ref{lem:one_dir} and Lemma~\ref{lem:achronal_timelike} the boundary $∂ι(M)=∂^-ι(M)$ of the extension is achronal. We choose for each point $p\in ∂ι(M)$ an open neighborhood in $\tilde{M}$ and coordinates such that the metric is close to Minkowski. As $∂ι(M)$ is compact we can find a finite subcover $\{O_i\}$. For each $i$ there exists a constant $\delta_i$ such that for each $p\in \bar{O}_i\cap ∂ι(M)$ and all $0<s\leq \delta_i$,  $ψ_s(p)\in \bar{O}_i$. As there are only finitely many $i$ we can set $δ_{min}=\min \{δ_i\}$ and obtain $\hyp_s=ψ_s(∂ι(M))\subset \bigcup_i \bar{O}_i$ for all $0<s\leq \delta_{min}$.
	
	The distance in $\hyp$ between any two points of a Cauchy hypersurface $\hyp$ is bounded above in each $O_i$ by a constant independent of $\hyp$ by a similar argument as in the proof of Lemma \ref{lem:distbound}. As $I^+(∂ι(M),\tilde{M})\subset ι(M)$ and the Cauchy hypersurfaces $\hyp_t$ are achronal in $M$ they can intersect $O_i$ only once (\ie $ι(\hyp_t)\cap O_i$ is connected), and therefore the total diameter of $\hyp_t$ is bounded. This contradicts the Definition \ref{def:expanding_global} of a globally expanding singularity.
\end{proof}

\section{Examples}
 \label{s21X16.1}

\medskip

 We will use the following Lemma to show that the examples below contain an expanding singularity towards the past, as required by Theorem \ref{T12VI17.1}:

 \begin{lemma}\label{lem:expanding_singularity}
	We consider a globally hyperbolic spacetime $(M, g)$ of dimension $n+1$ with a Cauchy time function $t: M\to (0,\infty)$ such that $M=(0,\infty)\times \hyp$. Suppose that there exists $t_L>0$ such that the subset $\{t<t_L\}\subset M$ can be covered by charts of the form
	$(0,t_L)\times U$, for some open subset $U\subseteq \hyp$, in which the metric takes the form
	\[
	g=g_{00}(t,x)\dt^2+%\sum_{i,j=1}^d
	g_{ij}(t,x)\dx^i\dx^j\,,
	\quad g_{00}<0
	\,,
	%  \,.
	\]
	and satisfies
	\begin{gather}\label{6.1_cond12}
	g_{11}(t,x)\xrightarrow{t\to 0} \infty\,,\quad g_{ij}(t,x)\xrightarrow{t\to 0} 0\quad \text{for} \quad (i,j)\neq(1,1)\,,\\
	\label{6.1_cond3}
	\text{and either}\quad g_{1i}=0 \quad\text{for}\quad i \ne 1\quad\text{or}\quad \frac{g_{mi}g_{mj}}{g_{mm}g_{ij}}\xrightarrow{t\to 0} 0 \quad \text{for}\quad i,j<m
	\end{gather}
	uniformly on compact subsets in $x$. Assume moreover that either
	\begin{itemize}
		\item $\hyp$ is compact and $\{t<t_L\}$ is covered by a single chart as above,
		\item or for every $p ∈ \{t<t_L\}\subset M$ there exists a chart as above which contains $I^-(p,M)$ and a compact set $K_p\subset \hyp$ such that $I^-(p, M)\subset (0,t_σ)\times K_p$.
	\end{itemize}
	Then $(M,g)$ contains an expanding singularity towards the past.
\end{lemma}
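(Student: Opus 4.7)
The plan is to verify Definition~\ref{def:expanding} directly. Given $p\in\{t<t_L\}$ and $q\in I^-(p,M)$, I must exhibit a sequence $t_i\to 0$ along which the Riemannian diameter of $I^-(q,M)\cap\{t=t_i\}$ inside $I^-(p,M)\cap\{t=t_i\}$ tends to infinity. Under either alternative of the lemma, $I^-(p,M)$ lies inside a single chart $(0,t_L)\times U$ of the prescribed block form, and its spatial projection is contained in some compact set $K_p\subset U$ (with $K_p=\hyp$ in the compact case). I may therefore work inside this chart, and the hypotheses~\eqref{6.1_cond12}--\eqref{6.1_cond3} hold uniformly on $K_p$ as $t\to 0$.

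The first step is to produce a pair $r_i^\pm\in I^-(q,M)\cap\{t=t_i\}$ whose $x^1$-coordinates differ by a fixed positive amount $2\delta$, independently of $t_i$. For this I consider the two curves
\[
\sigma^\pm(s)=\Bigl(s,\;x^1(q)\pm\int_s^{t_q}\!a(s')\,ds',\;x^2(q),\ldots,x^n(q)\Bigr),\qquad s\in(0,t_q],
\]
where $a(s)=\tfrac12\sqrt{|g_{00}|/g_{11}}$ evaluated at $(s,x(q))$ on some fixed compact sub-interval $[\varepsilon,t_q/2]\subset(0,t_q)$, and $a\equiv 0$ elsewhere. Since the metric has no $dt\,dx^i$ cross terms, $g(\dot\sigma^\pm,\dot\sigma^\pm)=g_{00}+g_{11}a^2<0$, so $\sigma^\pm$ are past-directed timelike curves from $q$. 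Setting $\delta:=\int_\varepsilon^{t_q/2}a\,ds>0$, the endpoints $r_i^\pm:=\sigma^\pm(t_i)$ lie in $I^-(q,M)\cap\{t=t_i\}$ for every $t_i<\varepsilon$ and satisfy $x^1(r_i^+)-x^1(r_i^-)=2\delta$.

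The second step is a diagonal-dominance lower bound for the induced length of any curve $\gamma\subset I^-(p,M)\cap\{t=t_i\}$ joining $r_i^-$ and $r_i^+$. Under the first alternative in~\eqref{6.1_cond3}, the condition $g_{1i}\equiv 0$ for $i\ne 1$ splits the induced metric on $\{t=t_i\}$ and gives $g_{ij}\xi^i\xi^j\ge g_{11}(\xi^1)^2$ pointwise. Under the second alternative (the real obstacle, given the peculiar form of the ratio hypothesis), specialising the condition to $i=j<m$ and exploiting symmetry of $g$ yields $|g_{ij}|/\sqrt{g_{ii}g_{jj}}\to 0$ uniformly on $K_p$ for every pair $i\ne j$. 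An AM--GM estimate on the off-diagonal terms then gives, for $t$ small enough and all $x\in K_p$,
\[
g_{ij}(t,x)\,\xi^i\xi^j\;\ge\;\tfrac12\sum_k g_{kk}(t,x)(\xi^k)^2\;\ge\;\tfrac12\,g_{11}(t,x)(\xi^1)^2.
\]
Integrating the square root along $\gamma\subset\{t_i\}\times K_p$, and using that the total variation of $x^1$ along $\gamma$ is at least $2\delta$, one obtains $L(\gamma)\ge\sqrt{2}\,\delta\,\sqrt{\min_{x\in K_p}g_{11}(t_i,x)}$. Since $g_{11}(t,x)\to\infty$ uniformly on $K_p$ by~\eqref{6.1_cond12}, the right-hand side diverges along any $t_i\to 0$, proving the diameter blowup required by Definition~\ref{def:expanding}.
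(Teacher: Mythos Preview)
Your argument is correct and reaches the same conclusion as the paper's, but the two proofs handle the key lower bound differently. The paper builds an orthonormal coframe $\{\theta^m\}$ by an iterative Gram--Schmidt procedure, defining $h^{m-1}=h^m-(\theta^m)^2$ and tracking that $h^m_{ij}/g_{ij}\to 1$; this uses the full ratio condition in~\eqref{6.1_cond3} (including the $i\neq j$ cases) to control each step, and yields $\theta^1=\alpha\,dx^1$ with $\alpha/\sqrt{g_{11}}\to 1$, whence $|\dot\gamma|_g\ge |\theta^1(\dot\gamma)|=\alpha|\dot\gamma^1|$. You instead specialise~\eqref{6.1_cond3} to $i=j<m$ to obtain $|g_{ij}|/\sqrt{g_{ii}g_{jj}}\to 0$ for all $i\neq j$, and then use the elementary AM--GM estimate to get $g_{ij}\xi^i\xi^j\ge\tfrac12 g_{11}(\xi^1)^2$ directly. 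Your route is shorter and avoids the recursive frame construction entirely; the paper's frame is more structural and would be reusable if one needed finer control of the other directions, but for the purpose of this lemma your estimate is all that is required. For the first step (producing points in $I^-(q)$ with fixed $x^1$-separation at all small times), the paper simply asserts the existence of such points, while you give an explicit timelike-curve construction; note a small sloppiness in your check that $\sigma^\pm$ is timelike, since $g_{00},g_{11}$ must be evaluated along $\sigma^\pm$ rather than at $(s,x(q))$ --- this is harmless by continuity, but you should say so, or alternatively just invoke openness of $I^-(q)$ together with the fact that $\partial_t$ is timelike, as the paper implicitly does.
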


\begin{proof}
	We need to show that for every $p \in \{t<t_L\}$ and $q\in I^-(p,M)$ there exist a sequence $t_i$ decreasing to zero such that the diameter of $I^-(q,M)\cap \{t=t_i\}$ in $I^-(p,M)\cap \{t=t_i\}$ tends to infinity as $t_i\to 0$.

	If the second condition in \eqref{6.1_cond3} holds, we start by defining an orthonormal frame $\{θ^m\}$, $g_{ij}dx^i dx^j=\delta_{ij}θ^i θ^j$ on each slice by
	\[
	θ^m=\sqrt{h^m_{mm}}(dx^m+\sum_{i=1}^{m-1}\frac{h^m_{mi}}{h^m_{mm}}dx^i)\,,
	\]
	where
	\[
	h^{m-1}=h^m-(θ^m)^2\quad\text{and}\quad h^n_{ij}:=g_{ij}\,.
	\]
	The tensor field $h^{m-1}$ is positive definite on the subspace spanned by $\{\partial_1,\ldots,\partial_{m-1}\}$: Indeed, if there were a vector $X=X^1\partial_1+\ldots X^{m-1}\partial_{m-1}$ such that $h^{m-1}(X,X)\leq0$ then we could choose $X^m$   so  that $θ^m(X+X^m\partial_m)=0$. Setting $0\ne Y:=X+X^m\partial_m$, this gives  $h^{m-1}(Y,Y)=h^{m-1}(X,X)\le 0$ and  $h^{m}(Y,Y)=h^{m-1}(X,X)\le 0$, giving a contradiction to the positive definiteness of $g$.
	
	Writing $h^{m-1}$ in terms of $h^m$ we obtain
	\[
	h^{m-1}=\sum_{i,j=1}^{m-1}\left(h^m_{ij}-\frac{h^m_{mi}h^m_{mj}}{h^m_{mm}}\right)dx^i dx^j\,.
	\]
	One checks that $h^{m-1}_{ij}/h^{m}_{ij}\to 1$ follows from \eqref{6.1_cond3},   which further implies  $h^{m}_{ij}/g_{ij}\to 1$ for $i,j\le m$.
	By construction we have  $θ^1=α(t,x)dx^1$, with $α(t,x)/\sqrt{g_{11}(t,x)}\to 1$, and with all convergences uniform on compact subsets.
	
	We now choose $k_0\in \N$ and $ε>0$ such that the sequences of points
	$$
	\mbox{
		$y_1^k=(1/k,x^1(q)-ε,x^2(q),\dots)$\, and\, $y_2^k=(1/k,x^1(q)+ε,x^2(q),\dots)$}
	$$
	 fulfill $y_1^{k}, y_2^{k}\in I^-(q,M)$ for all $k∈\{k_0,k_0+1,\dots\}$. The distance between $y_1^k$ and $y_2^k$, and therefore the diameter of $I^-(q,M)\cap\{t=1/k\}$, in $I^-(p,M)\cap \{t=1/k\}$ is bounded from below by
	\[\begin{split}
	d_{\{1/k\}\times K_p}(y^k_1,y^k_2)&=\inf_σ \int\limits_{-ε}^ε \sqrt{\sum_{i=1}^n (θ^i(\dot{σ}))^2}\ds\\
	&\geq\inf_σ \int\limits_{-ε}^ε \sqrt{(θ^1(\dot{σ}))^2}\ds=\inf_σ \int\limits_{-ε}^ε |\alpha(t,x)\dot{σ}^1|\ds\,,
	\end{split}\]
	where the infimum is taken over curves $σ: [-ε,ε]\to \{1/k\}\times K_p$ (with $K_p:=\hyp$ in the case where $\hyp$ is compact), such that $σ(-ε)=y_1^k$ and $σ(ε)=y_2^k$. If the first condition holds in \eqref{6.1_cond3} we obtain the same expression with $α:=g_{11}$.
	
	We have
	\[\begin{split}
	\lim_{k\to \infty}
	d_{\{1/k\}\times K_p}(y^k_1,y^k_2)
	&=\lim_{k\to\infty}\inf_σ
	\int\limits_{-ε}^{+ε}|\alpha(σ(s)) \dot{\sigma}^1(s)|\ds\\
	&>\lim_{k\to\infty}
	\left(\min_{x\in K_p}α(1/k,x)\right)
	\inf_σ\int\limits_{-ε}^{+ε}|\dot{\sigma}^1(s)|\ds\,,
	\end{split}\]
	where the minimum diverges by the uniform divergence of $α$ on $K_p$ and the integral is independent of $k$ and positive as $\int\limits_{-ε}^{+ε} \dot{σ}^1(s)\ds=2ε$.
\end{proof}

\begin{remark}\label{rem:sequence}
	Lemma \ref{lem:expanding_singularity} directly generalizes to the case where the expanding direction rotates within the $t=\text{const}$ hypersurfaces. Indeed, let $\{t_k\}$ be a sequence of times and $\{X_k\}$ a sequence of associated expanding directions (constant in $x$ unit vectors w.r.t.\ the Euclidean metric $δ_{ij}$) in each hypersurface $\hyp_k:=\{t=t_k\}$. A spatial rotation for each $t_k$ gives $X_k = ∂_{x^1}$ and the conditions on the metric are as before in these adapted coordinates.
	This is relevant for the case of Bianchi IX and BKL singularities discussed in Sections \ref{sec:bianchiix} and \ref{sec:genericbkl} below.
\end{remark}

\subsection{AVTD metrics without symmetries}\label{s20X17.1}
These solutions, constructed in~\cite{Chrusciel2015,Klinger2015}, take the form
\begin{equation}\label{10II15.1}
\ds^2=-e^{-2\sum_{a=1}^3  β^a}\D τ^2+\sum_{a=1}^3  e^{-2β^a}\Ni{^a_i}\Ni{^a_j}\dx^i\dx^j\,,
\end{equation}
with $β^a$ and $\Ni{^a_i}$, $i,a\in\{1,2,3\}$,  depending on all coordinates $τ$, $x^i$ and behaving asymptotically as
\begin{equation}\label{10II15.2}
β^a=β_\circ^a+τp_\circ^a+O(e^{-τν})\quad\text{and}\quad
\Ni{^a_i}=δ^a_i+O(e^{-2τ(p_\circ^i-p_\circ^a)})
 \,,
\end{equation}
with $ν>0$.
They are parameterized by freely prescribable analytic functions $β_\circ^2, β_\circ^3$ and $P\indices{_\circ^2_1}$ depending on all space coordinates and two analytic functions $p_\circ^2$ and $p_\circ^3$ depending on all space coordinates which are free except for the inequalities
\begin{equation}\label{p_ineq}
0<p_\circ^2<(\sqrt{2}-1)p_\circ^3\,.
\end{equation}
The function $P\indices{_\circ^2_1}$ does not appear in the asymptotic expansion \eqref{10II15.2} but influences lower order terms in the expansion of $\Ni{^a_i}$.

The remaining exponent $p_\circ^1$ is given by
\begin{equation}\label{17V17.1}
p_\circ^1=-\frac{p_\circ^2 p_\circ^3}{p_\circ^2+p_\circ^3}<0\,,
\end{equation}
\ie $x^1$ is the expanding direction.

The solutions approach a curvature singularity as $τ\to\infty$.

The construction of these spacetimes in~\cite{Klinger2015} was done in a purely local manner, regardless of the topology of the solutions. Here we will assume that the spatial topology is compact.

We replace the time coordinate $τ$ with $t=-\log τ$, giving, with the asymptotic expansion \eqref{10II15.2} inserted,
\begin{equation}\label{9II16.1}\begin{split}
\ds^2=&-t^{2σ_{p_\circ}-2}e^{-2\sigma_{β_\circ}}(1+O^ν)\dt^2\\
&+\sum_{a=1}^3 t^{2p_\circ^a}e^{-2β_\circ^a}(1+O^ν)(δ^a_i+O(t^{2(p_\circ^i-p_\circ^a)}))(δ^a_j+O(t^{2(p_\circ^j-p_\circ^a)}))\dx^i\dx^j
\end{split}\end{equation}
where $O^ν=O(t^ν)$, $\sigma_{p_\circ}=p_\circ^1+p_\circ^2+p_\circ^3$, and $\sigma_{β_\circ}=β_\circ^1+β_\circ^2+β_\circ^3$.
We see directly that $g_{11}\to \infty$ and $g_{ij}\to 0$, satisfying \eqref{6.1_cond12}. From \eqref{9II16.1} we find that the metric component $g_{ij}$ is of the same order as the faster decaying one of $g_{ii}$ and $g_{jj}$ and that $g_{ii}/g_{jj}\to 0$ for $i>j$. This implies that \eqref{6.1_cond3} is also satisfied, \ie the solutions are of the form required in Lemma \ref{lem:expanding_singularity}.

\subsection{$T^3$ Gowdy}

The $T^3$ Gowdy spacetimes have metrics of the form
\[
g=e^{(τ-λ)/2}(-e^{-2τ}\Dτ^2+\Dθ^2)+e^{-τ}(e^P\Dσ^2+2e^PQ\Dσ\Dδ+(e^PQ^2+e^{-P})\Dδ^2)
 \,,
\]
where $λ$, $P$ and $Q$ are functions of $τ$ and $θ$ and the singularity is approached as $τ\to\infty$. The $τ=\text{const}$ slices are toroidal and therefore compact. Ringström showed~\cite{RingstroemSCC} that generic Gowdy spacetimes asymptotically behave as follows:
\[
P(τ,θ)=v_a(θ)τ+ϕ(θ)+o(1),\qquad Q(t,θ)=q(θ)+o(1),\qquad λ=v_a(θ)^2τ+o(τ)
\]
where $0<v_a(θ)<1$ and the lower order terms converge uniformly. Therefore the $g_{θθ}$ component of the metric diverges towards the singularity while all other space components converge to zero. By redefining the time coordinate as $t:=-\ln τ$ the metric can be brought to the form required in Lemma~\ref{lem:expanding_singularity}:  since the off-diagonal components $g_{θi}$ vanish, so \eqref{6.1_cond3} is satisfied.

\subsection{Further AVTD spacetimes}
 \label{s21X16.2}

Using Fuchsian methods, asymptotically Kasner-like spacetimes without symmetries have been constructed in the presence of various matter fields
 and in vacuum (either for spacetime dimension  higher than 10, or the ones described in section \ref{s20X17.1} which exist for a restricted set of asymptotic data).
The constructions generally start by defining a reduced evolution system, the ``velocity term dominated" (VTD) or ``Kasner-like" system,
which does not include spatial derivatives, and then using Fuchs-type theorems to show that solutions of the full Einstein equations which approach these exist. These theorems guarantee a convergence which is uniform on compact subsets, as required by Lemma~\ref{lem:expanding_singularity}. Assuming that the spatial manifold is compact, the only things left to verify are the conditions \eqref{6.1_cond12}, \eqref{6.1_cond3}

In the case of stiff fluid or scalar field matter there is no expanding direction and Lemma \ref{lem:expanding_singularity} does not apply~\cite{AnderssonRendall,DHRW}.

In the case of $\geq10$ dimensional vacuum, there is at least one expanding direction~\cite{DHRW}. These solutions can be constructed using the same approach as those in section \ref{s20X17.1}, leading to the same behavior of the $\Ni{^a_i}$ and therefore also satisfy $\eqref{6.1_cond3}$. Lemma \ref{lem:expanding_singularity} is applicable if the spatial manifold is compact.

There are various results on general (non-Gowdy) $T^2$ symmetric spacetimes, the most general of which assumes the so-called ``half-polarization" condition~\cite{Clausen2007,Ames2013}. These have one expanding direction, satisfy \eqref{6.1_cond3}, and are spatially compact.

\subsection{``Mixmaster" Bianchi IX spacetimes}\label{sec:bianchiix}

The Bianchi models are homogeneous, but generically anisotropic, spacetimes, which are divided into types according to the structure constants of their Killing vector fields.

As shown by Ringström~\cite{Ringstrom2000}, generic solutions of type IX have at least
three $α$-limit points on the ``Kasner circle", \ie they approach at least three different Kasner metrics arbitrarily closely as the singularity is approached. It is conjectured that generic $α$-limit sets contain an infinite number of points on the Kasner circle, and that the dynamics approaches that of the discrete ``Kasner map", which shows chaotic behavior~\cite{Heinzle2009}. One would naively expect to be able to choose a sequence of times and directions as in Remark \ref{rem:sequence}. However, it has been pointed out to us by Hans Ringstr\"om (private communication) that this expectation is incorrect, and that with some work one can infer from~\cite{Ringstrom2000} that the space-diameter of the surfaces of homogeneity in all Bianchi IX vacuum models approaches zero as the singularity is approached.

%As these solutions have compact Cauchy  surfaces, Theorem \ref{T16VI17.1} applies.

\subsection{Generic spacelike singularities in the context of the BKL conjecture}\label{sec:genericbkl}

The BKL conjecture states, roughly speaking, that generic spacelike singularities behave at each spatial point as a ``Mixmaster" Bianchi IX solution.

There are various heuristic arguments supporting this conjecture. Using the so-called ``cosmological billiards formalism", as described in~\cite{Damour2003}, the metric takes the form \eqref{10II15.1}, with the same behavior of the $\Ni{^a_i}$ as in \eqref{10II15.2}, and with the $β^a(τ,x)$ now \emph{not} showing linear behavior in $τ$, but rather a sequence of approximately linear phases (so-called Kasner epochs) connected by ``bounces" off  increasingly sharp potential walls. The $β^a$ are expected to be unbounded towards the singularity ($τ \to \infty$), but they might well be bounded from below, and therefore it is not clear (and perhaps unlikely, given the Bianchi IX result mentioned above) whether one can choose a sequence of times and directions as described in Remark \ref{rem:sequence}.

It is expected that such solutions will develop particle horizons (this is sometimes referred to as ``asymptotic silence")\cite{heinzle2009_attractor}. This would imply that the second option in Lemma \ref{lem:expanding_singularity} is fulfilled, \ie $I^-(p, M)\subset (0,t_σ)\times K_p$ for all points $p$ sufficiently close to the singularity and some compact set $K_p$.
% If either this is true, or the spatial slices are compact, \ref{lem:expanding_singularity} shows
Whether or not this can be used to infer that these spacetimes are $C^0$ inextendible remains to be seen.

It should be emphasized that the arguments of~\cite{Damour2003} are heuristic.
%\ptcheck{globally on 20 X 17}

\bigskip
\noindent{\sc Acknowledgements}
%\ack
We are grateful to the Erwin Schr\"odinger International Institute for Mathematics and Physics for hospitality during part of work on this project. The research of PTC was supported in part by the Polish National Center of Science (NCN) under grant 2016/21/B/ST1/00940. PK was supported by a uni:docs grant of the University of Vienna. Useful discussions with Greg Galloway are acknowledged.

\bibliographystyle{iopart-num-arxivurl}
\bibliography{./bib}
\end{document}